\newif\ifFull
\def\@begintheorem#1#2{\sl \trivlist \item[\hskip \labelsep{\bf #1\ #2:}]}
\def\@opargbegintheorem#1#2#3{\sl \trivlist
      \item[\hskip \labelsep{\bf #1\ #2\ #3:}]}
\newtheorem{theorem}{Theorem}
\newtheorem{lemma}[theorem]{Lemma}
\newtheorem{corollary}[theorem]{Corollary}
\newenvironment{proof}{\noindent{\bf Proof:}}{\hspace*{\fill}\rule{6pt}{6pt}\bigskip}
\renewcommand{\paragraph}[1]{\smallskip\noindent\textbf{#1}\hspace{1ex}}
\begin{document}
\ifFull\else
\conferenceinfo{SPAA'11,} {June 4--6, 2011, San Jose, California, USA.}
\CopyrightYear{2011}
\crdata{978-1-4503-0743-7/11/06}
\clubpenalty=10000
\widowpenalty = 10000
\hyphenpenalty=500
\tolerance=700
\setlength{\emergencystretch}{1ex}
\fi

\title{Data-Oblivious External-Memory Algorithms for the \\
Compaction, Selection, and Sorting of Outsourced Data}

\author{
{Michael T. Goodrich} \\
Dept.~of Computer Science \\ 
University of California, Irvine \\
\url{http://www.ics.uci.edu/~goodrich/}
}

\date{}

\maketitle 

\begin{abstract}

We present data-oblivious
algorithms in the external-memory model for 
compaction, selection, and sorting.
Motivation for such problems comes 
from clients
who use outsourced data storage services and wish 
to mask their data access patterns.
We show that compaction and selection
can be done data-obliviously using $O(N/B)$ I/Os, and sorting can be done,
with a high probability of success,
using $O((N/B)\log_{M/B} (N/B))$ I/Os.
\ifFull
Our methods use a number of new algorithmic techniques, including 
data-oblivious uses of invertible Bloom lookup tables, a butterfly-like
compression network, randomized data thinning,
and ``shuffle-and-deal'' data perturbation.
In addition, since data-oblivious sorting is the bottleneck
in the ``inner loop'' in existing
oblivious RAM simulations, our sorting result improves
the amortized time overhead to do oblivious RAM simulation 
by a logarithmic factor in the external-memory model.
\fi
\end{abstract}

\ifFull\else
\category{F.2.2}{Analysis of Algorithms and Problem
Complexity}{Nonnumerical Algorithms and Problems}
\vspace{-0.1in}
\terms{Algorithms, Theory}
\vspace{-0.1in}
\keywords{External memory, data-oblivious algorithms, sorting.}
\vspace{-0.1in}
\fi

\section{Introduction}

Online data storage is becoming a large and growing industry, in which
companies provide large storage capabilities for hosting outsourced 
data for individual or corporate users, such as with
Amazon S3 and
Google Docs.
\ifFull
As an example of the size and growth of this industry, 
we note that in March 2010 the Amazon S3
service passed the milestone of hosting over 100 billion 
objects~\footnote{%
  \url{www.datacenterknowledge.com/archives/2010/03/09/amazon-s3-now-hosts-100-billion-objects/}},
which was double the number from the year before.

\fi
The companies that provide such services typically give their users
guarantees about the availability and integrity of their data, but
they often have commercial interests in learning as much as possible
about their users' data.
\ifFull
For instance, it may be in such a company's commercial
interest to search a user's documents for keywords that
could trigger advertisements that are better 
targeted towards a user's interests.
Of course, a significant portion of a user's data might
be sensitive, either for personal or proprietary reasons; hence, there
is a need for techniques for keeping a user's data private in such
circumstances.
\fi

Clearly, a necessary first step towards achieving privacy
is for a user to store his or her data in encrypted form, e.g., using
a secret key that is known only to the user.
Simply encrypting one's data is not sufficient to
achieve privacy in this context, however,
since information about the content of a user's
data is leaked by the pattern in which the user accesses it.
For instance, Chen {\it et al.}~\cite{cwwz-sclwa-10} show that
highly sensitive information
can be inferred from user access patterns at popular
health and financial web sites even if the contents of those communications
are encrypted.

\paragraph{Problem Statement.}
The problem of protecting the privacy of a user's data 
accesses in an outsourced data storage facility
can be defined in terms of 
\emph{external-memory data-oblivious RAM computations}.
In this framework, a user, Alice, 
has a CPU with a private cache, which can be used for data-dependent
computations.
She stores the bulk of her data on 
a data storage server owned by Bob, who provides 
her with an interface that supports indexed addressing of her data,
so each word has a unique address.
As in the standard external-memory model (e.g., see~\cite{av-iocsr-88}), 
where
external storage is typically assumed to be on a disk drive, 
data on the external storage device
is accessed and organized in contiguous blocks,
with each block holding $B$ words, where $B\ge 1$.
That is, this is an adaptation of the standard
external-memory model so that
the CPU and cache are associated with Alice and the external disk
drive is associated with Bob.
The service provider, Bob, is trying to learn as much as possible about the
content of Alice's data and he can view the sequence and location of all
of Alice's disk accesses. 
But he cannot see the content of what is read or
written, for we assume it is encrypted using a semantically secure
encryption scheme such that re-encryption of the same value is
indistinguishable from an encryption of a different value.
Moreover, Bob cannot view the content or access patterns
of Alice's private cache, the size of which, $M$, is allowed to be function
of the input size, $N$, and/or, the block size, $B$.
Using terminology of the cryptography literature, Bob is assumed to
be an \emph{honest-but-curious} adversary~\cite{gmw-hpamg-87}, 
in that he correctly
performs all protocols and does not tamper with any data, but he is
nevertheless interested in learning as much as possible about
Alice's data.

We say that Alice's sequence of I/Os is \emph{data-oblivious} 
for solving some problem, $\cal P$ (like sorting or selection), if the
distribution of this sequence depends only on $\cal P$, $N$, $M$, and $B$,
and the length of the access sequence itself. 
In particular, this distribution should be independent of the
data values in the input.
Put another way, this definition of a data-oblivious computation 
means that $\Pr(S\, |\, {\cal M,P},N,M,B)$,
the probability that Bob sees an access sequence, $S$,
conditioned on a specific initial configuration of external memory, $\cal M$, 
and the values of $\cal P$, $N$, $M$, and $B$,
satisfies
\[
\Pr(S\, |\, {\cal M},{\cal P},N,M,B) = \Pr(S\, |\, {\cal M}',{\cal P},N,M,B),
\]
for any memory configuration
${\cal M}' \not= {\cal M}$ such that $|{\cal M}'|=|{\cal M}|$.
Note, in particular, that this implies that the length, $|S|$, of
Alice's access sequence cannot depend on specific input values.
Examples of data-oblivious access sequences for an array, $A$,
of size $N$, in Bob's external memory, 
include the following (assuming $B=1$):
\begin{itemize}
\item
Simulating a circuit, $\cal C$, with its inputs taken in order from $A$.
For instance, $\cal C$ could be a Boolean circuit or an AKS sorting
\ifFull
network~\cite{aks-osn-83,aks-scps-83}.
\else
network~\cite{aks-osn-83}.
\fi
\item
Accessing the cells of $A$ according to a random hash function, $h(i)$,
as $A[h(1)]$, $A[h(2)]$, $\ldots$, $A[h(n)]$, or random permutation,
$\pi(i)$,
as $A[\pi(1)]$, $A[\pi(2)]$, $\ldots$, $A[\pi(n)]$.
\end{itemize}
Examples of computations on $A$ that would \emph{not} be data-oblivious 
include the following:
\begin{itemize}
\item
Using the standard merge-sort or quick-sort algorithm to sort $A$.
(Neither of these well-known algorithms is data-oblivious.)
\item
Using values in $A$ as indices for a hash table, $T$,
and accessing them
as $T[h(A[1])]$, $T[h(A[2])]$, $\ldots$, $T[h(A[n])]$, where $h$ is a
random hash function.
(For instance, think of what happens if all the values in $A$ are
equal and how improbable the resulting $n$-way collision
in $T$ would be.)
\end{itemize}
This last example is actually a little subtle. 
It is, in general, not data-oblivious to access a hash table, $T$, in
this way, but note that such an access
sequence \emph{would} be data-oblivious if the elements in $A$ 
are always guaranteed to be distinct.
In this case, each such
access in $T$ would be to a random location and every possible sequence
of such accesses in $T$ would be equally likely for any set of
distinct values in $A$, assuming our random hash function, $h$, satisfies
the random oracle model 
(e.g., see~\cite{br-roap-93}).

\paragraph{Related Prior Results.}
Data-oblivious sorting is a classic algorithmic
problem, which Knuth~\cite{k-ss-73} studies in some depth, since
deterministic schemes give rise to sorting networks, such as the
theoretically-optimal
\ifFull
$O(n\log n)$ AKS network~\cite{aks-osn-83,aks-scps-83,p-isnod-90,s-snlgf-09}
as well as practical sorting networks~\cite{l-ipaaa-92,p-ssn-72}.
\else
$O(n\log n)$ AKS network~\cite{aks-osn-83}
as well as practical sorting networks~\cite{l-ipaaa-92}.
\fi
Randomized data-oblivious sorting algorithms running in
$O(n\log n)$  time and succeeding with high probability are
likewise studied by Leighton and
Plaxton~\cite{lp-hsn-98} and Goodrich~\cite{g-rsaso-10}.
Moreover, data-oblivious sorting is finding applications to
privacy-preserving secure multi-party
computations~\cite{wlgdz-bpstp-10}.
For the related selection problem,
Leighton {\it et al.}~\cite{lms-pnsms-97} give a randomized
data-oblivious solution that runs in $O(n\log\log n)$ time and they
give a matching lower bound
for methods exclusively based on compare-exchange
operations.
\ifFull
In addition,
we note that the data-oblivious compaction problem is strongly
related to the construction of $(n,m)$-concentrator networks~\cite{c-ocsgn-79}.
\fi

Chaudhry and Cormen~\cite{cc-scnoa-06} 
argue that data-oblivious external-memory 
sorting algorithms have several advantages, including 
good load-balancing and the avoidance
of poor performance on ``bad'' inputs.
They give a data-oblivious sorting algorithm for the parallel disk
model and analyze its 
performance experimentally~\cite{DBLP:conf/esa/ChaudhryC05}. But their
method is size-limited and
does not achieve the optimal I/O complexity of previous 
external-memory sorting algorithms 
(e.g., see~\cite{av-iocsr-88}), 
which use $O((N/B)\log_{M/B} (N/B))$ I/Os.
None of the previous
external-memory sorting algorithms that use 
$O((N/B)\log_{M/B} (N/B))$ I/Os
are data-oblivious, however.

Goodrich and Mitzenmacher~\cite{gm-mpcho-10}
show that any
RAM algorithm, $\cal A$, can be simulated in a data-oblivious 
fashion in the external-memory 
model so that each 
memory access performed by $\cal A$ has an overhead of
$O(\min\{\log^2 (N/B),\,\log^2_{M/B} (N/B)\log_2 N\})$ 
amortized I/Os in the simulation, and
which is data-oblivious with
high probability (w.v.h.p.)\footnote{In this paper, we take
  the phrase ``with high probability'' to mean that the
  probability is at least $1-1/(N/B)^d$, for a given constant $d\ge 1$.},
which both improves and extends a result of 
Goldreich and Ostrovsky~\cite{go-spsor-96} to the I/O model.
The overhead of the Goodrich-Mitzenmacher simulation is optimal for the case
when $N/B$ is polynomial in $M/B$ and 
is based on an ``inner loop'' use of a deterministic 
data-oblivious sorting algorithm that uses
$O((N/B)\log^2_{M/B} (N/B))$ I/Os, but this result is not optimal in
general.

\paragraph{Our Results.}
We give data-oblivious external-memory algorithms for
compaction, selection, quantile computation, and sorting of
outsourced data.
Our compaction and selection algorithms use $O(N/B)$ I/Os and
our sorting algorithm uses $O((N/B)\log_{M/B} (N/B))$ I/Os,
which is asymptotically optimal~\cite{av-iocsr-88}.
All our algorithms are randomized 
and succeed with high probability.
Our results also imply
that one can improve the expected 
amortized overhead for a randomized external-memory
data-oblivious RAM simulation to be $O(\log_{M/B} (N/B)\log_2 N)$.

Our main result is for the sorting problem, as we are not aware of
any asymptotically-optimal oblivious external-memory sorting
algorithm prior to our work.
Still, our other results are also important,
in that our sorting algorithm is based on our 
algorithms for selection and quantiles,
which in turn are based on 
new methods for data-oblivious data compaction.
We show, for instance, that efficient compaction leads to a
selection algorithm that uses $O(n)$ I/Os, 
and succeed w.v.h.p, where $n=O(N/B)$.
Interestingly, this result beats a lower bound for the complexity
of parallel selection networks, 
of $\Omega(n\log\log n)$, due to Leighton {\it et al.}~\cite{lms-pnsms-97}.
Our selection result doesn't invalidate their lower bound, however, for 
their lower bound is for circuits that only use
compare-exchange operations, whereas our algorithm uses other
``blackbox''
primitive operations besides compare-exchange, including 
addition, subtraction, data copying, and random functions.
Thus, our result
demonstrates the power of using operations other
than compare-exchange in a data-oblivious selection algorithm.

Our algorithms are based on 
a number of new techniques for data-oblivious algorithm design, including
the use of a recent data structure
by Goodrich and Mitzenmacher~\cite{goodrich2011invertible}, 
known as the invertible
Bloom lookup table.
We show in this paper how this data structure
can be used in a data-oblivious way to solve the
compaction problem, which in turn leads to efficient methods for
selection, quantiles, and sorting.
Our methods also use a butterfly-like routing network and a
``shuffle-and-deal'' technique reminiscent of 
Valiant-Brebner routing~\cite{vb-uspc-81},
so as to avoid data-revealing access patterns.

Given the motivation of our algorithms from outsourced data privacy,
we make some reasonable assumptions regarding
the computational models used by some of our algorithms.
For instance, for some of our results, 
we assume that $B\ge\log^\epsilon (N/B)$, which we call the
\emph{wide-block} 
assumption. This is, in fact, equivalent or weaker than several
similar assumptions in the external-memory literature 
\ifFull
(e.g., see~\cite{bdf-cobt-05,bf-lbemd-03,mm-embfs-02,ps-siod-09,rs-opasp-08,%
wyz-dehlb-09,yz-ocpcd-10}).
\else
(e.g., see~\cite{bdf-cobt-05,mm-embfs-02}).
\fi
In addition, we sometimes also use
a weak \emph{tall-cache} 
assumption that $M\ge B^{1+\epsilon}$,
for some small constant $\epsilon>0$,
which is also common in the external-memory literature
\ifFull
(e.g., see~\cite{DBLP:conf/swat/Brodal04,bf-olco-03,cacheoblivious05,%
bdf-cobt-05,DBLP:conf/icalp/BrodalFM05,flpr-coa-99}).
\else
(e.g., see~\cite{DBLP:conf/swat/Brodal04,flpr-coa-99}).
\fi
\ifFull
For example, if we take $\epsilon=1/2$ and Alice wishes to sort an
array of up to $2^{67}$ 64-bit words (that is, a zettabyte of data), then, 
to use our algorithms, she would need a block size of 
at least $8$ words (or 64 bytes) and a private cache of at least 23
words (or 184 bytes).
In addition,
we assume throughout 
that keys and values can be stored in memory words or blocks of
memory words, which support the operations of 
read, write, copy, compare, add, and subtract, as in the standard RAM model.
\fi

\section{Invertible Bloom Filters}
As mentioned above,
one of the tools we use in our algorithms involves a data-oblivious use of
the invertible Bloom lookup table of 
Goodrich and Mitzenmacher~\cite{goodrich2011invertible},
which is itself based on
the invertible Bloom filter data structure of 
Eppstein and Goodrich~\cite{eg-sirrt-10}.

An invertible Bloom lookup
table, ${\cal B}$, is a randomized data structure storing
a set of key-value pairs.
It supports the following operations (among others):
\begin{itemize}
\item
insert$(x,y)$: insert the key-value pair, $(x,y)$, into ${\cal B}$.
This operation always succeeds, assuming that keys are distinct.
\item
delete$(x,y)$: remove the key-value pair, $(x,y)$, from ${\cal B}$.
This operation assumes $(x,y)$ is in ${\cal B}$.
\item
get$(x)$: lookup and return the value, $y$, associated with the 
key, $x$, in $\cal B$.
This operation may fail, with some probability.
\item
listEntries: list all the key-value pairs being stored in ${\cal B}$.
With low probability, this operation
may return a partial list along with an ``list-incomplete'' error condition.
\end{itemize}

When an invertible-map Bloom lookup table ${\cal B}$ is 
first created, it initializes
a table, $T$, of a specified capacity, $m$, which is initially empty.
The table $T$ is the main storage used to implement $\cal B$.
Each of the cells in $T$ stores a constant number of fields, each of
which is a single memory word or block.
Insertions and deletions can proceed independent of the capacity $m$ and
can even create situations where the number, $n$,
of key-value pairs in ${\cal B}$ can be much larger than $m$.
Nevertheless, the space used for $B$ remains $O(m)$.
The get and listEntries methods, on the other hand, only
guarantee good probabilistic success when $n<m$.

Like a traditional Bloom filter~\cite{b-sttoh-70}, an invertible Bloom
lookup table uses a set of $k$ random hash functions,
$h_1$, $h_2$, $\ldots$, $h_k$, defined on the universe of keys,
to determine where items are stored.
In this case, we also assume that, for any $x$, the $h_i(x)$ values
are distinct, which can be achieved by a number of methods, including
partitioning.
Each cell contains three fields:
\begin{itemize}
\item
a \texttt{count} field,
which counts the number of entries that have been mapped to this cell,
\item
a \texttt{keySum} field,
which is the sum of all the keys that have been mapped to this cell,
\item
a \texttt{valueSum} field,
which is the sum of all the values that have been mapped to this cell.
\end{itemize}

Given these fields, which are all initially set to $0$, performing 
the insert operation is fairly straightforward:
\begin{itemize}
\item
insert$(x,y)$:
\begin{algorithmic}[100]
\FOR {each (distinct) $h_i(x)$ value, for $i=1,\ldots,k$}
\STATE
add $1$ to $T[h_i(x)].\mbox{\texttt{count}}$
\STATE
add $x$ to $T[h_i(x)].\mbox{\texttt{keySum}}$
\STATE
add $y$ to $T[h_i(x)].\mbox{\texttt{valueSum}}$
\ENDFOR
\end{algorithmic}
\end{itemize}

The delete method is basically the reverse of that above.
The listEntries method is similarly simple\footnote{
	We describe this method in a destructive
	fashion---if one wants a non-destructive method, then one should first
	create a copy of the table $T$ as a backup.}:
\begin{itemize}
\item listEntries:
\begin{algorithmic}[100]
\WHILE {there is an $i\in[1,m]$ s.~t.~$T[i].\mbox{\texttt{count}} = 1$}
\STATE output
$(T[i].\mbox{\texttt{keySum}}\, ,\, T[i].\mbox{\texttt{valueSum}})$
\STATE call delete($T[i].\mbox{\texttt{keySum}}$)
\ENDWHILE
\end{algorithmic}
\end{itemize}
It is a fairly straightforward exercise to implement this method in $O(m)$
time, say, by using a link-list-based priority queue of cells in $T$ indexed by
their \texttt{count} fields and modifying the delete method to update this
queue each time it deletes an entry from $\cal B$.
If, at the end of the while-loop, all the entries in $T$ are empty, then we say
that the method succeeded.

\begin{lemma}[(Goodrich and Mitzenmacher)~\cite{goodrich2011invertible}]
\label{lem:bloom}
Given an invertible Bloom lookup table, $T$,
of size $m=\lceil \delta kn \rceil$, holding at most $n$ key-value pairs,
the listEntries method succeeds with probability $1-1/n^c$, 
where $c\ge 1$ is any given constant and $k\ge 2$ and $\delta\ge 2$
are constant depending on $c$.
\end{lemma}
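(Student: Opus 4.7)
\noindent\textbf{Proof Plan for Lemma~\ref{lem:bloom}.}
The plan is to reduce the analysis of listEntries to the well-studied problem of peeling a random $k$-uniform hypergraph. Associate with the table a hypergraph $H$ whose $m$ vertices are the cells of $T$ and whose $n$ hyperedges are the key sets $\{h_1(x),\ldots,h_k(x)\}$, one for each inserted key $x$. Since the $h_i(x)$ are distinct and (in the random-oracle model) uniform over $k$-subsets of $[m]$, $H$ is distributed as a uniformly random $k$-uniform hypergraph with $n$ edges on $m$ vertices. The listEntries procedure, which repeatedly peels off a vertex of \texttt{count} equal to $1$, is precisely the standard peeling algorithm on $H$: it succeeds in emptying $T$ if and only if $H$ has an empty $2$-core.

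First I would record this equivalence precisely, observing that the peeling order in the algorithm is irrelevant to whether peeling terminates successfully, so one can analyze the static combinatorial property ``$H$ has empty $2$-core'' instead of the dynamic process. Then I would bound the probability that $H$ contains a non-empty sub-hypergraph in which every vertex has degree at least $2$. Using the first-moment method, for each $s$, the expected number of vertex subsets $S \subseteq [m]$ of size $s$ that support at least $s$ hyperedges (a necessary condition for a non-empty $2$-core on $S$) is at most
\[
\binom{m}{s}\binom{n}{s}\left(\frac{\binom{s}{k}}{\binom{m}{k}}\right)^{s}
\le \left(\frac{em}{s}\right)^{s}\left(\frac{en}{s}\right)^{s}\left(\frac{s}{m}\right)^{ks}.
\]
Substituting $m = \lceil \delta k n\rceil$ and simplifying gives a bound of the form $(C(s/n)^{k-2}/\delta^{k})^{s}$ for some absolute constant $C$, up to minor corrections for small $s$. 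Summing over $s$ from a constant lower threshold up to $n$, the dominant contribution comes from the smallest admissible $s$, and for $\delta$ taken large enough in terms of $c$ and $k$ (and $k\ge 2$ guaranteeing convergence), the total is $O(1/n^{c})$.

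The last step is to handle the small-$s$ tail separately, where the inequality $\binom{s}{k}/\binom{m}{k}\le (s/m)^k$ is tight but $s/n$ is not small. Here I would argue that a $2$-core on $s<k$ vertices is impossible, and for $k \le s = O(1)$ one appeals directly to the $\delta^{-ks}$ factor, choosing $\delta\ge 2$ and then increasing it with $c$ to absorb any constant prefactor. I expect the main obstacle to be bookkeeping the exact trade-off between $\delta$, $k$, and $c$ so that the polynomial bound $1-1/n^c$ holds for \emph{all} $s$ simultaneously; this is where one must be careful, since the natural first-moment sum is dominated by different values of $s$ depending on whether $k=2$ or $k\ge 3$. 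Once that trade-off is set, the lemma follows from a union bound combining the probability of a non-empty $2$-core with the negligible event that two hash values $h_i(x),h_j(x)$ collide (which is ruled out by the distinctness assumption).
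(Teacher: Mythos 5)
The paper does not prove Lemma~\ref{lem:bloom}; it is stated as a cited result from Goodrich and Mitzenmacher~\cite{goodrich2011invertible}, so there is no in-paper proof to compare against. That said, your high-level plan --- reducing the \texttt{listEntries} analysis to emptiness of the $2$-core of the associated random $k$-uniform hypergraph, and observing that the peeling order is irrelevant --- is the standard and correct reduction, and is indeed how the cited reference (and the broader peeling literature) frames the problem.

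The first-moment step, however, has a genuine gap. You assert that ``$S$ supports at least $s$ hyperedges'' is a \emph{necessary} condition for a non-empty $2$-core on a vertex set $S$ of size $s$. That holds only for $k=2$: the degree-sum bound gives $kj\ge 2s$, i.e.\ $j\ge\lceil 2s/k\rceil$, so for $k\ge 3$ the $2$-core can have far fewer than $s$ edges, and your count over pairs with $|E'|=|S|$ does not upper-bound the failure probability. This is not a corner case --- it misses the dominant failure mode. For $k=3$, the most likely bad event is two keys hashing to the same triple of cells, a $2$-core with $j=2$ edges on $s=3$ vertices, which occurs with probability $\Theta(n^2/m^3)=\Theta(1/n)$; your sum at $s=3$ charges only $\Theta(n^3/m^6)=\Theta(1/n^3)$ to that $s$, off by a polynomial factor. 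The correct first-moment sum must range over edge-set sizes $j\ge 2$ and bound the event that $j$ random hyperedges cover at most $\lfloor kj/2\rfloor$ vertices, roughly
\[
\sum_{j\ge 2}\binom{n}{j}\binom{m}{\lfloor kj/2\rfloor}\left(\frac{kj}{2m}\right)^{kj},
\]
which is dominated by $j=2$ and scales as $\Theta\!\left(n^{-(k-2)}\right)$ for any constant $\delta$. This also corrects a second misconception in your plan: increasing $\delta$ alone cannot push the failure probability below $n^{-(k-2)}$, since the duplicate-$k$-tuple event is insensitive to the constant in $m$; it is $k$ that must grow with $c$ (roughly $k\ge c+2$), which is exactly why the lemma allows \emph{both} $k$ and $\delta$ to depend on $c$. (A minor algebra slip: substituting $m=\lceil\delta k n\rceil$ into your displayed bound yields $(\delta k)^{k-1}$, not $\delta^{k}$, in the denominator, though this does not change the shape of the estimate.)
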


The important observation about the functioning of the invertible
Bloom lookup table, for the purposes of this paper, is that 
the sequence of memory locations accessed during 
an insert$(x,y)$ method is oblivious to 
the value $y$ and the number of items already stored in the table.
That is, the locations accessed in performing an
insert method depend only on the key, $x$.
The listEntries method, on the other hand, is not oblivious to keys or values.

\section{Data-Oblivious Compaction}
In this section, we describe efficient data-oblivious
algorithms for compaction.
In this problem, we are given an array, $A$, of $N$ cells, 
at most $R$ of
which are marked as ``distinguished''
(e.g., using a marked bit or a simple test)
and we want to produce an array, $D$, of size $O(R)$ that contains all
the distinguished items from $A$.
Note that this is the fundamental operation done during disk
\ifFull
defragmentation (e.g., see~\cite{r-ansfp-69}), which is a natural operation
\else
defragmentation, which is a natural operation
\fi
that one would want to do in an outsourced file system, since users
of such systems are charged for the space they use.

We say that such a compaction algorithm is \emph{order-preserving}
if the distinguished items in $A$ remain 
in their same relative order in $D$.
In addition, we say that a compaction algorithm is \emph{tight}
if it compacts $A$ to an array $D$ of size exactly $R$.
If this size of $D$ is merely $O(R)$,
and we allow for
some empty cells in $D$, then we say that the compaction is \emph{loose}.
Of course, we can always use a data-oblivious sorting algorithm, such as 
with the
following sub-optimal result, to perform a tight order-preserving compaction.

\begin{lemma}[(Goodrich and Mitzenmacher)~\cite{gm-mpcho-10}]
\label{lem:sub-sort}
Given an array $A$ of size $N$, one can 
sort $A$ with a deterministic data-oblivious algorithm that uses
$O((N/B)\log^2_{M/B} (N/B))$ I/Os, assuming $B\ge 1$ and $M\ge 2B$.
\end{lemma}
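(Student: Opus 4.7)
The plan is to adapt the standard external-memory mergesort into a data-oblivious procedure, organized as one run-formation phase followed by $R = \lceil \log_{M/B}(N/B) \rceil$ rounds of oblivious $(M/B)$-way merging. For the run-formation phase, I would partition $A$ into $\lceil N/M \rceil$ contiguous chunks of size $M$, load each chunk into Alice's private cache in $O(M/B)$ sequential I/Os, sort it inside the cache using a deterministic oblivious sorting network such as AKS or Batcher's bitonic sort, and write it back. The access pattern is a fixed linear scan independent of any input value, and the total I/O cost is $O(N/B)$.

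The heart of the proof is the second phase, where in each round we must merge every group of $k = \Theta(M/B)$ sorted runs of total length $L$ into a single sorted run, obliviously. The textbook $k$-way merge that repeatedly fetches the block containing the current minimum across the run frontiers is inherently data-dependent and leaks information to Bob. My plan is to replace it with an oblivious merger built from Batcher's bitonic merger: two sorted sequences of total length $\ell$ can be merged by a network of $O(\log \ell)$ levels, each consisting of $O(\ell)$ compare-exchange operations performed on fixed, stride-based pairs of indices. Cascading this binary merger along a balanced binary tree of depth $\lceil \log_2 k \rceil$ yields an oblivious $k$-way merger whose access pattern is determined entirely by $L$, $k$, $B$, and $M$.

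For the I/O accounting I would argue that, at each level of the bitonic network, the compare-exchange operations partition the array into stride-aligned groups whose union can be streamed through the cache in $O(L/B)$ I/Os; when the stride is too large to fit, one recurses on smaller sub-networks, but the geometric structure of Batcher's strides keeps the overhead bounded by one factor of $\log_{M/B}(N/B)$. This would give $O((L/B)\log_{M/B}(N/B))$ I/Os per merge, which summed across one round contributes $O((N/B)\log_{M/B}(N/B))$, and across all $R$ rounds yields the claimed $O((N/B)\log^2_{M/B}(N/B))$. Data-obliviousness is immediate since each primitive step — the cache-local AKS sort, the fixed stride patterns of Batcher's merger, and the round/block structure of the merge tree — depends only on $N$, $M$, and $B$.

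The main obstacle is clearly the middle step: engineering the $(M/B)$-way oblivious merger so that it is simultaneously (i) a fixed-topology network whose comparator positions depend only on $N$, $M$, and $B$, (ii) I/O-efficient enough to cost only $O(\log_{M/B}(N/B))$ extra per round rather than a full $\log$ factor in $N$, and (iii) correct even under the weak cache assumption $M \ge 2B$ without invoking tall-cache. Getting the stride-by-stride cache scheduling right for Batcher's network under this minimal assumption, while still maintaining a strictly data-independent access sequence, is where the bulk of the work lies.
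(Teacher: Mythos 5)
This lemma is imported verbatim from Goodrich and Mitzenmacher~\cite{gm-mpcho-10}; the present paper gives no proof of it, so I can only assess your proposal on its own terms. Your high-level skeleton --- run formation in $O(N/B)$ I/Os, then $\Theta(\log_{M/B}(N/B))$ rounds of oblivious $(M/B)$-way merging at cost $O((N/B)\log_{M/B}(N/B))$ per round --- is the right shape, and you correctly single out the oblivious multiway merger as the crux. But the route you sketch through that crux does not close, and the specific claim that Batcher's stride geometry keeps the overhead to ``one factor of $\log_{M/B}(N/B)$'' is where it breaks.

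Cascading binary bitonic mergers in a balanced tree of depth $\log_2(M/B)$ is, in effect, just binary mergesort. Even with optimal cache scheduling --- batching strides in groups of $\Theta(\log(M/B))$ consecutive levels by fetching the $\Theta(M/B)$ butterfly-related blocks together --- a single binary bitonic merge of $\ell$ elements costs $\Theta((\ell/B)\log_{M/B}(\ell/B))$ I/Os, which is fine in isolation; but your cascade has $\log_2(M/B)$ tree levels each touching all $L$ elements, so the full $(M/B)$-way merge costs $\Theta((L/B)\log(M/B)\log_{M/B}(L/B))$. That extra $\log(M/B)$ factor is intrinsic to the cascade, not an artifact of how you schedule the strides: a tree of binary mergers performs $\log_2 d$ passes where a true $d$-ary merger performs $O(1)$. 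Summed over the $\Theta(\log_{M/B}(N/B))$ rounds this gives $\Theta((N/B)\log(N/B)\log_{M/B}(N/B))$, not the claimed $O((N/B)\log^2_{M/B}(N/B))$; when $M/B$ is a constant power of $N/B$ the discrepancy is a full $\Theta(\log(N/B))$ factor. What the lemma actually requires is a genuinely $d$-ary oblivious merging primitive with $d = \Theta\bigl((M/B)^{\Omega(1)}\bigr)$ --- for example a $d$-way generalization of odd-even merge, or a columnsort-style construction that deals each input run into $d$ sub-runs by residue, recursively $d$-way merges the regrouped sub-runs, and finishes with a fixed bounded-displacement cleanup pass that fits in cache. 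Such a merger recurses to depth $\Theta(\log_d(L/M)) = \Theta(\log_{M/B}(L/B))$ with $O(L/B)$ I/Os per depth, which is exactly the per-merge bound you need, and it works under $M\ge 2B$. You flagged this obstacle honestly, but the binary-bitonic route you gestured at cannot resolve it; the missing ingredient is the $d$-ary network itself.
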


In the remainder of this section, we give several compaction algorithms,
which exhibit various trade-offs
between performance and the compaction properties listed above. 
Incidentally, each of these algorithms 
is used in at least one of our algorithms for selection, quantiles,
and sorting.

\paragraph{Data Consolidation.}
Each of our compaction algorithms uses a data-oblivious consolidation 
preprocessing operation.
In this operation, we are given an array $A$ of size $N$ such that at most
$R\le N$ elements in $A$ are marked as ``distinguished.''
The output of this step is an array, $A'$, of $\lceil N/B\rceil$ blocks, such
that $\lfloor R/B\rfloor$ blocks in $A'$ are completely full of
distinguished elements and at most one block in $A'$ is partially
full of distinguished elements. All other blocks in $A'$ are
completely empty of distinguished elements.
This consolidation step uses $O(N/B)$ I/Os, 
assuming only that $B\ge 1$ and $M\ge 2B$, and it is order-preserving
with respect to the distinguished elements in $A$.

We start by viewing $A$ as being subdivided 
into $\lceil N/B\rceil$ blocks of size $B$ each
(it is probably already stored this way in Bob's memory).
We then scan the blocks of $A$ from beginning to end, keeping a block,
$x$, in Alice's memory as we go.
Initially, we just read in the first block of $A$ and let $x$ be this
block.
Each time after this that we scan a block, $y$, of $A$,
from Bob's memory,
if the distinguished 
elements from $x$ and $y$ in Alice's memory can form a full block, then we write
out to Bob's memory 
a full block to $A'$ of the first $B$ distinguished elements
in $x\cup y$, maintaining the relative order of these distinguished
elements.
Otherwise, we merge the fewer than $B$ distinguished elements in
Alice's memory into the single block, $x$, again, 
maintaining their relative order,
and we write out an empty block to $A'$ in Bob's memory.
We then repeat.
When we are done scanning $A$, we write out $x$ to Bob's memory; 
hence, the access
pattern for this method is a simple scan of $A$ and $A'$, which is clearly
data-oblivious. Thus,
we have the following.

\begin{lemma}
\label{lem:consolidate}
Suppose we are given an array $A$ of size $N$, with 
at most $R$ of its elements marked as distinguished.
We can deterministically consolidate $A$ into 
an array $A'$ of $\lceil N/B\rceil$ 
blocks of size $B$ each with a data-oblivious
algorithm that uses $\lceil N/B\rceil$ I/Os, such that all but possibly
the last block in $A'$ are 
completely full of distinguished elements or completely empty of 
distinguished elements.  This computation assumes 
that $B\ge 1$ and $M\ge 2B$, and it preserves the relative order of
distinguished elements.
\end{lemma}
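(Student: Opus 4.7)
The plan is to verify the three claims of the lemma for the scanning procedure already sketched above the statement: correctness of the output format, the $\lceil N/B\rceil$ I/O bound, and data-obliviousness of the access sequence.

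First I would make the loop invariant precise. After processing the $j$-th block of $A$, Alice holds in her private cache a single block $x$ containing at most $B-1$ distinguished items, namely the suffix of the distinguished items from the first $j$ blocks of $A$ that did not yet form a full output block, in their original relative order. The remaining $j$ written blocks of $A'$ consist of some number of completely full blocks (each containing $B$ consecutive distinguished items in their original order) followed by completely empty blocks, and the total count of distinguished items in the written prefix plus the count held in $x$ equals the count of distinguished items seen so far. I would verify this invariant by induction on $j$: when a new block $y$ is read, the case analysis on whether $|x|+|y_{\text{dist}}|\ge B$ or not directly preserves both the ``at most $B-1$ held'' bound and the order-preservation, because in the first case the output block consists of the first $B$ elements of the concatenation $x\cdot y_{\text{dist}}$ and the leftover is a suffix of that concatenation. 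After the final block is processed, $x$ is flushed; since $x$ holds at most $B-1$ items, this is the unique partially full block in $A'$, and by the total count argument $\lfloor R/B\rfloor$ full blocks have been written, as claimed.

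For the I/O count, each of the $\lceil N/B\rceil$ iterations performs exactly one block read from $A$ and one block write to $A'$, giving $\Theta(N/B)$ I/Os total; the cache needs only to hold the two blocks $x$ and $y$ simultaneously, which uses $2B\le M$ words and is permitted under the stated assumption $M\ge 2B$. For data-obliviousness, note that the observable access sequence Bob sees is simply a linear scan of $A$ interleaved with a linear write-scan of $A'$; the addresses are determined purely by $N$ and $B$ and not by which elements carry the distinguished marker, and every iteration writes a block regardless of whether that block is full, partial, or entirely empty (the encrypted contents are indistinguishable to Bob by the semantic security assumption stated in the introduction). Hence the distribution of the I/O sequence depends only on $N$ and $B$, satisfying the definition of data-obliviousness given earlier.

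The only subtle point I would flag is ensuring that a block is written on \emph{every} iteration, including those in which no full block is produced; without this convention Bob could distinguish inputs by the ratio of writes to reads. I would therefore make explicit that in the ``fewer than $B$ distinguished items'' branch, we still write a block to $A'$ — an encrypted all-empty block — so that the write sequence is exactly one block per read. With this fix, the three claimed properties all follow directly from the loop invariant and the fixed read/write pattern, completing the proof.
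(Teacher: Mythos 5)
Your proof is correct and follows essentially the same scan-with-carry approach the paper gives in the paragraph preceding the lemma; you simply make explicit the loop invariant and inductive verification that the paper leaves implicit, and the ``subtle point'' you flag (writing an encrypted empty block on non-full iterations) is already stated in the paper's description of the ``otherwise'' branch. One small observation: the lemma claims exactly $\lceil N/B\rceil$ I/Os while your (accurate) accounting of one read plus one write per iteration yields $2\lceil N/B\rceil$; your use of $\Theta(N/B)$ is the safer statement, and the paper's constant appears to count only reads or only writes.
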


\paragraph{Tight Order-Preserving Compaction for Sparse Arrays.}
The semi-oblivious property of the invertible Bloom lookup table
allows us to efficiently perform tight order-preserving 
compaction in a data-oblivious fashion, provided the input array is
sufficiently sparse.
Given the consolidation preprocessing step 
(Lemma~\ref{lem:consolidate}), we describe
this result in the RAM model (which could be applied to blocks that
are viewed as memory words for the external-memory model),
with $n=N/B$ and $r=R/B$.

\begin{theorem}
\label{thm:tight-sparse}
Suppose we are given an array, $A$,
of size $n$, holding at most $r\le n$ distinguished items. 
We can perform a tight order-preserving compaction of
the distinguished items from $A$ into an array $D$ of size $r$
in a data-oblivious fashion in the RAM model
in $O(n+r\log^2 r)$ time. 
This method succeeds with probability $1-1/r^c$, for any given
constant $c\ge1$.
\end{theorem}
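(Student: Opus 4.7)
The plan is to exploit the semi-oblivious property of IBLT insertion (noted after Lemma~\ref{lem:bloom}) together with a data-oblivious sort on just the $r$ recovered items. First I would apply Lemma~\ref{lem:consolidate} to obtain the consolidated array $A'$, and then make a single left-to-right pass over $A'$ while maintaining a counter $c$ in Alice's cache. Whenever the cell examined is distinguished I increment $c$ and tag the cell with rank $c$; otherwise I tag it with a placeholder. The scan touches the cells of $A'$ in a fixed order, so this step is trivially data-oblivious and costs $O(n)$ time.

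Next I would allocate an IBLT $B$ of size $m = \Theta(r)$ chosen according to Lemma~\ref{lem:bloom}. I then execute exactly one insert per cell of $A'$, for a total of $n$ insertions: for a distinguished cell with tag $\rho$, I run the genuine insert with key $\rho$ and value equal to the cell's contents; for a non-distinguished cell, I run a \emph{phantom} insert whose key is a fresh dummy value outside $\{1,\ldots,r\}$ (say $r + i$). The phantom insert touches the same $k$ hash cells but writes them back unchanged (re-encrypted in the external-memory instantiation). By the property highlighted after Lemma~\ref{lem:bloom}, the insert's address trace depends only on the key; modelling the $h_j$ as random oracles, the hash targets of ranks and dummies are drawn from the same uniform distribution, so Bob sees a sequence whose distribution is independent of the input. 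Since phantom inserts do not update counts or sums, the IBLT ends with precisely the $r$ pairs $\{(\rho, v_\rho) : \rho=1,\ldots,r\}$.

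For extraction I invoke listEntries on $B$. The crucial observation is that the peeling order is governed entirely by the cell count fields, which are a deterministic function of the hash functions and the \emph{fixed} key set $\{1,\ldots,r\}$; hence the access sequence is data-oblivious in this setting, and by Lemma~\ref{lem:bloom} it succeeds with probability at least $1-1/r^c$. The resulting $r$ pairs emerge in some fixed (but arbitrary) permutation of the ranks. I would then feed them into a data-oblivious Batcher bitonic sort keyed on the rank component, costing $O(r\log^2 r)$ time, and finally stream the sorted values into $D[1],\ldots,D[r]$. Summing the steps gives $O(n) + O(r) + O(r \log^2 r) = O(n + r \log^2 r)$.

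The main obstacle is justifying that the insert phase is truly data-oblivious even though real and phantom inserts do different internal updates and use different keys. The argument rests on two pillars: the semi-oblivious insert property of the IBLT (address trace is a function of the key alone, not of the value or the current count), and the random-oracle assumption that makes hash outputs of arbitrary keys identically distributed. A secondary worry is that a dummy key could ever collide with a real key and corrupt $B$; forcing dummies to lie outside $\{1,\ldots,r\}$ eliminates this. With these in hand, Lemma~\ref{lem:bloom} supplies the high-probability guarantee, and the Batcher network supplies the deterministic oblivious sort needed to produce the tight, order-preserved output $D$.
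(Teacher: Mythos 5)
Your construction breaks down at the listEntries step, and the claim that its access sequence is data-oblivious ``in this setting'' does not hold. You argue that the peeling order is a function only of the hash functions and the fixed key set $\{1,\ldots,r\}$, so its marginal distribution is independent of the input. That much is true. But data-obliviousness is a property of the \emph{entire} access sequence, and the joint distribution of the insertion trace and the peeling trace does depend on the input. The peeling phase touches exactly the cells $h_\ell(\rho)$ for $\rho\in\{1,\ldots,r\}$, and those are precisely the cells that were touched during insertion at the distinguished positions (the ones whose tag was a rank), whereas non-distinguished positions touched the fresh cells $h_\ell(r+i)$, which are probabilistically disjoint from the peeling cells. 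An adversary who sees both phases can therefore match the peeling accesses back to the positions in $A'$ that produced them and recover the distinguished set. Concretely, with $n=2$, $r=1$, $k=1$: if position~$1$ is distinguished the combined trace is $(h(1),h(3),h(1))$, and if position~$2$ is distinguished it is $(h(2),h(1),h(1))$; in the first the third access always equals the first, in the second it always equals the second, so the two distributions are visibly different.

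This is exactly the leak the paper avoids by \emph{not} trying to make listEntries natively oblivious: the paper keys every position by its index $i$ (so the insertion trace is fully deterministic, namely $h_\ell(i)$ for $i=1,\ldots,n$), and then runs listEntries inside the Goodrich--Mitzenmacher data-oblivious RAM simulation of Lemma~\ref{lem:sub-sort}'s companion result, paying the $O(\log^2 r)$ amortized overhead per step. That ORAM layer decorrelates the peeling-phase access pattern from anything the adversary observed during insertion, which is the whole point. Your trick of keying by rank does make the \emph{contents} of the table input-independent, but it does not decorrelate the two phases; you still need the ORAM wrapper (or an equivalent shuffle/re-hash argument you have not supplied) around listEntries to close the proof. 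The Batcher sort at the end is fine and matches the $O(r\log^2 r)$ term, but the overall bound is then $O(n + r\log^2 r)$ with the ORAM-simulated listEntries contributing another $O(r\log^2 r)$, exactly as in the paper.
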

\begin{proof}
We create an invertible Bloom lookup table, $T$, of size $3r$.
Then we map each entry, $A[i]$, into $T$ using the key-value pair $(i,A[i])$.
Note that the insertion algorithm begins by reading $k$ cells of $T$
whose locations in this case depend only on $i$ (recall that $k$ is
the number of hash functions).
If the entry $A[i]$ is distinguished, then this operation
involves changing the fields of these
cells in $T$ according to the invertible Bloom
lookup table insertion method and then writing them back to $T$.
If, on the other hand, $A[i]$ is not distinguished,
then we return the fields back to each cell $T[h_j(i)]$ unchanged
(but re-encrypted so that Bob cannot tell of the cells were changed
or not).
Since the memory accesses in $T$ are the same
independent of whether $A[i]$ is distinguished or not, each key we use
is merely an index $i$, and insertion into an invertible Bloom
lookup table is oblivious to all factors other than the key, this
insertion algorithm is data-oblivious.
Given the output from this insertion phase, which is the table $T$,
of size $O(r)$, we then perform a RAM
simulation of the listEntries method to list out the distinguished
items in $T$, using the simulation
result of Goodrich and Mitzenmacher~\cite{gm-mpcho-10}.
The simulation algorithm performs the actions of a RAM computation
in a data-oblivious fashion such that each step of the RAM
computation has an amortized time overhead of $O(\log^2 r)$.
Thus, simulating the listEntries method in a data-oblivious fashion
takes $O(r\log^2 r)$ time and succeeds with
probability $1-1/r^c$, for any given constant $c>0$,
by Lemma~\ref{lem:bloom}.
The size of the output array, $D$, is exactly $r$.
Note that, at this point, the items in the 
array $D$ are not necessarily in the same relative order that they
were in $A$.
So, to make this compaction operation order-preserving, we complete
the algorithm by performing a data-oblivious sorting of $D$, using
each item's original position in $A$ as its key, say by 
Lemma~\ref{lem:sub-sort}.
\end{proof}

Thus, we can perform tight data-oblivious compaction for sparse arrays,
where $r$ is $O(n/\log^2 n)$, in
linear time.  Performing this method on an array that is not sparse,
however, would result in an $O(n\log^2 n)$ running time.
Nevertheless, we can perform such an action on a dense array
faster than this.

\paragraph{Tight Order-Preserving Compaction.}
Let us now
show how to perform a tight order-preserving compaction for a dense
array in a data-oblivious fashion using $O((N/B)\log_{M/B} (N/B))$ I/Os.
We begin by performing a data consolidation preprocessing operation
(Lemma~\ref{lem:consolidate}).
This allows us to describe our compaction algorithm 
at the block level,
for an array, $A$, of $n$ blocks, where $n=\lceil N/B\rceil$,
and a private memory of size $m=O(M/B)$.

Let us define a routing network, $\cal R$, for performing this
action, in such a way that $\cal R$ is somewhat like a butterfly
network (e.g., see~\cite{l-ipaaa-92}).
There are $\lceil \log n\rceil$ levels, $L_0,L_1,\ldots$,
to this network, with each
level, $L_i$, consisting of $n$ cells (corresponding to the positions in $A$).
Cell $j$ of level $L_i$ is connected to cell $j$ and cell $j-2^i$
of level $L_{i+1}$.
(See Figure~\ref{fig:levels}.)

\begin{figure}[hbt!]
\ifFull
\vspace*{-0.3in}
\centering \hspace*{1.2in}\includegraphics[width=7in]{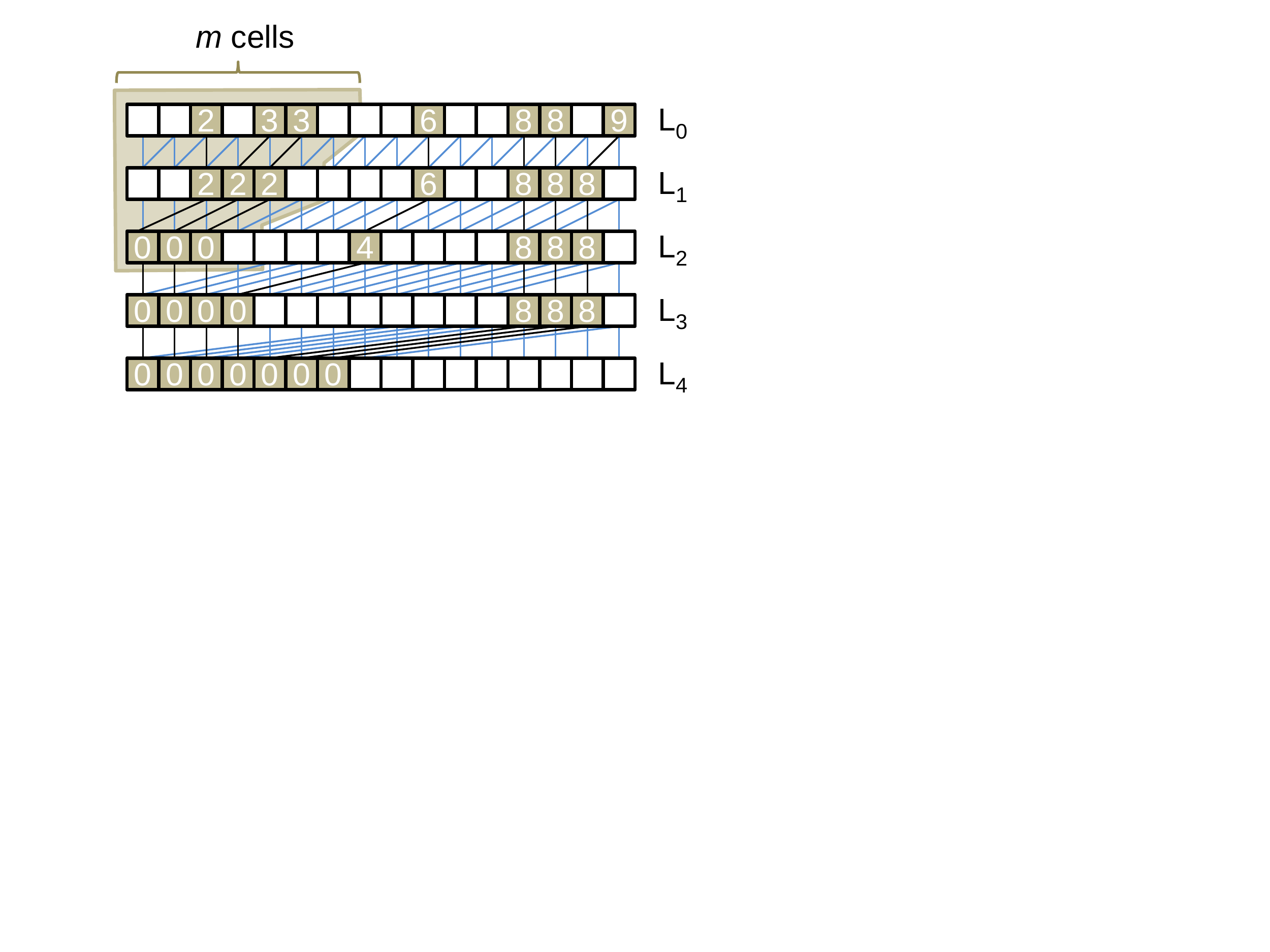}
\vspace*{-3.4in}
\else
\centering \includegraphics[width=3in, trim=0.6in 4.2in 4.5in 0.15in, clip]{levels}
\fi
\caption{\label{fig:levels} A butterfly-like compaction network.
Occupied cells are shaded and labeled with the remaining distance
that the block for that cell needs to move to the left. In addition,
a block of $m$ cells on level $L_0$ is highlighted to show that its
destination on a level $O(\log m)$ away is of size roughly $m/2$.}
\end{figure}

Initially, each occupied cell on level $L_0$ is labeled with the
number of cells that it needs to be moved to the left to create a
tight compaction. Note that such a labeling can easily be produced by
a single left-to-right data-oblivious scan of the array $A$.
For each occupied cell $j$ on level $L_i$, labeled with distance
label, $d_j$, we route the contents of cell $j$ to 
cell $j-(d_j \bmod 2^{i+1})$,
which, by a simple inductive argument,
is either cell $j$ or $j-2^i$ on level $L_{i+1}$.
We then update the distance label for this cell to be 
$d_j\leftarrow d_j-(d_j \bmod 2^{i+1})$, and continue.

\begin{lemma}
If we start with a valid set of distance labels on level
$L_0$, there will be no collisions at any internal level of the
network $\cal R$.
\end{lemma}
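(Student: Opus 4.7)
The plan is to prove the no-collision claim by induction on the level index $i$, maintaining two invariants throughout the routing. Invariant (a): on every level $L_i$, each occupied cell's distance label is a non-negative integer multiple of $2^i$. Invariant (b): order is preserved among the items, and for each item the quantity $j - d_j$ is preserved from level to level; by validity of the initial labeling on $L_0$, this quantity equals the item's rank-minus-one (its final destination in the tight compaction).

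First I would check that one application of the update rule carries both invariants from $L_i$ to $L_{i+1}$. If $d_j$ is a multiple of $2^i$ then $d_j \bmod 2^{i+1} \in \{0, 2^i\}$, so the new label $d_j - (d_j \bmod 2^{i+1})$ is a multiple of $2^{i+1}$; this yields invariant (a) and confirms the text's remark that each cell routes to either cell $j$ or cell $j-2^i$. Invariant (b) holds because the position and the label change by the same amount, leaving $j - d_j$ unchanged; order preservation presupposes the no-collision conclusion, so the three claims must be established together in the inductive step.

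The heart of the argument is to rule out a collision at $L_{i+1}$ using these invariants. Suppose items of ranks $k_1 < k_2$ sit at positions $P_1 < P_2$ on $L_i$ with labels $D_1, D_2$. Since each item either stays put or moves left by exactly $2^i$, a collision at $L_{i+1}$ requires $P_2 - P_1 = 2^i$ together with $D_1 \equiv 2^i \pmod{2^{i+1}}$ and $D_2 \equiv 0 \pmod{2^{i+1}}$. Writing $D_1 = 2^i(2a_1 + 1)$ and $D_2 = 2^{i+1} a_2$ for non-negative integers $a_1, a_2$, and substituting invariant (b) in the form $D_1 = P_1 - (k_1 - 1)$ and $D_2 = P_1 + 2^i - (k_2 - 1)$, a short calculation yields $k_2 - k_1 = 2^{i+1}(1 + a_1 - a_2)$, which, being a positive integer, is at least $2^{i+1}$.

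On the other hand, by the order-preservation part of invariant (b), all items whose ranks lie strictly between $k_1$ and $k_2$ must occupy distinct cells in the open interval $(P_1, P_2)$, which contains only $2^i - 1$ cells, so $k_2 - k_1 \leq 2^i$. This contradicts the lower bound $k_2 - k_1 \geq 2^{i+1}$, ruling out the collision and completing the induction. I expect the trickiest step to be the parity and arithmetic bookkeeping that extracts the lower bound $k_2 - k_1 \geq 2^{i+1}$; the complementary pigeonhole upper bound and the invariant maintenance are routine.
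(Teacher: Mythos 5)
Your argument follows essentially the same route as the paper's: both extract the modular relation $D_2 - D_1 \equiv 2^i \pmod{2^{i+1}}$ forced by a collision and combine it with a counting argument over the $2^i - 1$ cells strictly between the two colliding positions. The paper counts empty cells (and shows there would have to be at least $2^i$ of them), while you count occupied cells via the rank difference $k_2 - k_1$; these are dual views of the same pigeonhole. You are actually more careful than the paper in spelling out the invariants (label a multiple of $2^i$; destination $j - d_j$ preserved; order preserved), which the paper compresses into the terse phrase ``by the definition of these distance labels.''

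Two issues. First, you have the collision conditions backwards. Since every move is leftward, two items at positions $P_1 < P_2$ can only collide at $L_{i+1}$ if the \emph{left} one stays put and the \emph{right} one moves onto it, i.e.\ $D_1 \equiv 0 \pmod{2^{i+1}}$ and $D_2 \equiv 2^i \pmod{2^{i+1}}$ --- the reverse of what you wrote. Under your stated conditions, $P_1$ moves to $P_1 - 2^i$ while $P_2 = P_1 + 2^i$ stays, which is not a collision at all, so as written you derive a contradiction from a premise that never produces one. Luckily the arithmetic is symmetric: with the correct parametrization $D_1 = 2^{i+1}a_1$, $D_2 = 2^i(2a_2+1)$, the same substitution gives $k_2 - k_1 = 2^{i+1}(a_1 - a_2)$, again a positive multiple of $2^{i+1}$, so the bound and the contradiction survive once you fix the case analysis. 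Second, you correctly observe that the order-preservation part of invariant (b) must be carried through the same induction, but your inductive step then only rules out collisions at $P_2 - P_1 = 2^i$; an order swap without collision (right item jumping strictly past a stationary left item, which requires $P_2 - P_1 < 2^i$) is not covered by that case. The same counting handles it --- with order preserved at $L_i$ one has $D_2 - D_1 \ge 0$, hence $D_2 - D_1 \ge 2^i$ from the modular relation, hence $P_2 - P_1 \ge 2^i + 1$ regardless --- but you should state that explicitly, since otherwise the induction you invoke is not actually closed.
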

\begin{proof} 
Notice that we can cause a collision between two cells, $j$ and $k$,
in going from level $i$ to level $i+1$,
only if $k=j+2^i$ and $(d_j \bmod 2^{i+1})=0$ and $(d_k\bmod 2^{i+1})= 2^i$.
So suppose such a pair of colliding cells exists.
Then $(d_k-d_j) \bmod 2^{i+1} = 2^i$.
Also, note that there are $d_k-d_j$ empty cells between $k$ and $j$,
by the definition of these distance labels.
Thus, there are at least $2^i$ empty cells between $j$ and
$k$; hence, $k\ge j+2^i+1$, which is a contradiction.
So no such pair of colliding cells exists.
\end{proof}

Thus, we can route the elements of $A$ to their
final destinations in $O(n\log n)$ scans. We can, in fact, route
elements faster than this, by considering cells $2m-1$ at a time,
starting at a cell $jm+1$ on level $L_0$. In
this case, there are $m$ possible destination cells at
level $L_l$ with index at least $jm+1$,
where $l=i+\log m \, -\, 1$. 
So we can route all these $m$ cells with destinations among
these cells on level $L_l$ in internal memory 
with a single scan of these cells.
We can then move this ``window'' of $2m-1$ cells to the right by $m$
cells and repeat this routing. At the level $L_l$ we note that these
consecutive $m$ cells are now independent of one other, and the
connection pattern of the previous $\log m\, -\, 1$ levels will be
repeated for cells that are at distance $m$ apart.
Thus, we can repeat the same routing for level $L_l$ (after a simple
shuffle that brings together cells that are $m$ apart).
Therefore, we can route the cells from $L_0$ to level $L_{\log n}$ using
$O(n\log n/\log m) = O((N/B)\log_{M/B} (N/B))$ I/Os.
Moreover, since we are simulating a circuit, the sequence of I/Os is
data-oblivious.

\begin{theorem}
\label{thm:network}
Given an array $A$ of $N$ cells such that at most $R\le N$ 
of the cells in $A$ are
distinguished, we can deterministically perform a tight
order-preserving compaction of $A$ 
in a data-oblivious fashion using $O((N/B)\log_{M/B} (N/B))$ I/Os,
assuming $B\ge1$ and $M\ge 3B$.
\end{theorem}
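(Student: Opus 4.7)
My plan is to follow the construction already sketched in the discussion preceding the theorem, verifying that the pieces fit together with the claimed I/O bound and obliviousness.

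First, I would invoke Lemma~\ref{lem:consolidate} to reduce the problem to the block level: after one scan of $O(N/B)$ I/Os, the input becomes an array of $n = \lceil N/B \rceil$ blocks in which every block is either completely full of distinguished elements (appearing in their original relative order) or completely empty, with at most one exception. The remaining task is a tight order-preserving compaction on this block array, using a private memory of $m = \Theta(M/B)$ blocks. A second left-to-right oblivious scan attaches to each full block $j$ its distance label $d_j$, the count of empty block-slots to its left; this is again $O(n)$ I/Os and is oblivious because its access pattern only depends on $n$.

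Next I would define the butterfly-like network $\mathcal{R}$ exactly as above: $\lceil \log n \rceil + 1$ levels, with cell $j$ at $L_i$ connected to cells $j$ and $j - 2^i$ at $L_{i+1}$. The routing rule at level $L_i$ sends a block with current label $d_j$ to the cell with index $j - (d_j \bmod 2^{i+1})$, then replaces its label by $d_j - (d_j \bmod 2^{i+1})$. By induction on $i$, each label has its low $i+1$ bits zero at level $L_{i+1}$, so after $\lceil \log n \rceil$ levels the labels are identically zero and every block has landed at its correct tight-compaction position. The collision-freeness lemma already proved guarantees well-definedness. A naive simulation takes $\Theta(n \log n / m)$ I/Os per level and $\Theta(n \log^2 n / \log m)$ overall, so the main obstacle is not correctness but efficient batching.

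The key efficiency step, and where I would spend most of the proof, is showing that the first $\lfloor \log m \rfloor - 1$ levels can be simulated by sliding a window of $2m - 1$ consecutive blocks across $L_0$: all destinations of the $m$ occupied source slots in window $[jm+1, (j+1)m]$ lie within the same width-$(2m-1)$ band at level $L_{\log m - 1}$, because the leftward displacement introduced at levels $L_0, \ldots, L_{\log m - 2}$ is strictly less than $m$. Thus each window can be loaded, routed internally through $\log m - 1$ levels, and written back in $O(1)$ I/Os per window, giving $O(n/m)$ I/Os per group of $\log m$ levels. At level $L_{\log m - 1}$, the network's connection pattern on cells that are $m$ apart is identical to the connection pattern on consecutive cells at $L_0$, so an oblivious ``shuffle'' (itself a simple data-oblivious block permutation implementable in $O(n/m \cdot \log_m n)$ I/Os, or just $O(n/m)$ per stage since we only need to gather every $m$-th block into a contiguous window) reduces the next $\log m - 1$ levels to the same windowed subroutine. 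Iterating this $O(\log n / \log m)$ times gives a total of $O((n \log n)/\log m) = O((N/B)\log_{M/B}(N/B))$ I/Os.

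Finally, I would observe that obliviousness comes for free: consolidation, distance labeling, windowed scans, and the fixed shuffle between stages all have access sequences determined solely by $n$, $m$, and $B$, so Bob sees a sequence of I/Os that is a deterministic function of $N$, $R$, $M$, and $B$ rather than of the identities of the distinguished cells. Combined with the $M \ge 3B$ tall-cache condition needed so that the internal-memory simulation of the butterfly sub-network has room for the $2m - 1$ window plus a small amount of bookkeeping, this yields the claimed bound.
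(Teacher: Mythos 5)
Your proof follows the same approach as the paper's: consolidation via Lemma~\ref{lem:consolidate}, oblivious distance labeling, the butterfly-like routing network with the collision-freeness lemma, and a windowed simulation that handles $\log m - 1$ levels per pass with a fixed stride-$m$ shuffle between passes. Two small arithmetic slips are worth fixing. Loading, routing, and writing back a window of $2m-1$ blocks costs $\Theta(m)$ I/Os (one per block), not $O(1)$, so the cost per group of $\log m$ levels is $\Theta(n)$, not $O(n/m)$; the final bound $O(n\log n/\log m)$ then comes from $\Theta(n)$ per group times $O(\log n/\log m)$ groups, consistent with the theorem. Likewise the stride-$m$ shuffle is $\Theta(n)$ I/Os per stage (each block must be read and written once, though the permutation is fixed and hence oblivious), not $O(n/m)$; again $O(n)$ per stage is what you need, and it still yields the target total. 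Finally, $M\ge 3B$ is not a tall-cache assumption (that would be $M\ge B^{1+\epsilon}$, which this theorem does not require); it is just the requirement that Alice can hold a constant number of blocks simultaneously for the scans, with $m$ taken to be a suitable constant fraction of $M/B$ so that the window fits in cache.
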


Note that we can also use this method ``in reverse'' to expand any
compact array to a larger array in an order-preserving way.
In this case, each element would be given an expansion factor, which
would be the number of cells it should be moved to the right (with
these factors forming a non-decreasing sequence).

\paragraph{Loose Compaction.}
Suppose we are now given an array $A$ of size $N$, such that at most
$R<N/4$ of the elements in $A$ are marked as ``distinguished'' and
we want to map the distinguished elements from $A$ into an array $D$,
of size $5R$ using an algorithm that is data-oblivious.

We begin by performing the data consolidation algorithm of
Lemma~\ref{lem:consolidate}, to consolidate the distinguished elements
of $A$ to be full or empty block-sized cells (save the last one),
where we consider a cell ``empty'' if it stores a null value that is
different from any input value.
So let us view $A$ as a set of $n=\lceil N/B\rceil$ cells and
let us create an array, $C$, of size $4r$ block-sized cells,
for $r=\lceil R/B\rceil$.

Define an \emph{$A$-to-$C$ thinning pass} to be a sequential scan of
$A$ such that, for each $A[i]$,
we choose an index $j\in [1,4r]$ uniformly at random, read
the cell $C[j]$, and if it's empty, $A[i]$ is distinguished, and we
have yet to successfully write $A[i]$ to $C$ (which can be indicated
with a simple bit associated with $A[i]$), then we write
$A[i]$ to the cell $C[j]$.
Otherwise, if $C[j]$ is nonempty or $A[i]$ is not distinguished, then we write
the old value of $C[j]$ back to the cell $C[j]$.
Thus, the memory accesses made
by an $A$-to-$C$ thinning pass are data-oblivious and the time to
perform such a pass is $O(n)$.

We continue our algorithm by
performing $c_0$ rounds of $A$-to-$C$ thinning passes, where $c_0$ is
a constant determined in the analysis, so
the probability that any block of $A$ is
unsuccessfully copied into $C$ is at most $1/2^{2c_0}$.
We then consider regions of $A$ of size $\lceil c_1\log n\rceil=O(\log (N/B))$,
where $c_1$ is another constant determined in the analysis, so that 
each such region has at most $(c_1\log n)/2$ occupied cells w.v.h.p.
In particular, we assign these values according to the following.

\begin{lemma}
\label{lem:log-compress}
Consider a region of $c_1\log (N/B)$ blocks of $A$, such that each block
is unsuccessfully copied into $C$ independently 
with probability at most $1/2^{2c_0}$.
The number of blocks of $A$ that still remain in this region 
is over $(c_1/2)\log (N/B)$ with probability at most $(N/B)^{-c_1}$,
provided $c_0\ge 3$.
\end{lemma}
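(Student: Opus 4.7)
The plan is to treat this as a routine concentration argument, since the failures of different blocks are independent Bernoulli trials. Let $n = N/B$ and let $X$ denote the number of the $c_1 \log n$ blocks in the region that are still not successfully copied into $C$ after the $c_0$ thinning passes. By the hypothesis of the lemma, $X$ is stochastically dominated by a binomial random variable with parameters $n_0 = c_1 \log n$ and $p = 2^{-2c_0}$, so it suffices to bound $\Pr[X \geq (c_1/2)\log n]$ for this binomial.

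First, I would apply the standard union-bound-over-subsets inequality
\[
\Pr[X \geq k] \;\leq\; \binom{n_0}{k}\, p^k,
\]
with $k = (c_1/2)\log n$ and $n_0 = c_1 \log n$. Using $\binom{n_0}{k} \leq 2^{n_0} = n^{c_1}$ and computing $p^k = 2^{-2c_0 \cdot (c_1/2)\log n} = n^{-c_0 c_1}$, the right-hand side simplifies to $n^{c_1(1-c_0)}$. For $c_0 \geq 3$ this is at most $n^{-2c_1} \leq n^{-c_1}$, which is exactly the claimed tail bound.

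The argument really only has one substantive ingredient, namely the independence of the failure events across the $c_1\log n$ blocks, which is precisely what the statement assumes (and what the randomness of the hash choices within each $A$-to-$C$ thinning pass provides). If one wanted a sharper or more quantitative bound, one could instead apply the multiplicative Chernoff bound with $\mu \leq (c_1/2^{2c_0})\log n$ and relative deviation $1+\delta = 2^{2c_0-1}$, but since the lemma's target exponent is only $c_1$, the crude $\binom{n_0}{k}p^k$ estimate above already has slack to spare. No obstacle is anticipated; the only thing to be careful about is being explicit that the bound on the unsuccessful-copy probability is $2^{-2c_0}$ (not $2^{-c_0}$), which is where the factor of $2$ in the exponent comes from and why the choice $c_0 \geq 3$ suffices.
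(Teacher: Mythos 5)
Your proof is correct, but it reaches the tail bound by a different route than the paper. The paper applies its appendix Chernoff bound (Lemma~\ref{lem:chernoff}, the bound $\Pr(X > \gamma\mu) < 2^{-\gamma\mu\log(\gamma/e)}$) with $\mu = (c_1/2^{2c_0})\log(N/B)$ and $\gamma = 2^{2c_0-1}$, which produces the intermediate estimate $2^{-(c_1/2)\log(N/B)\,(2c_0-3)}$ and then requires $c_0\ge 3$ so that $2c_0-3\ge 2$. You instead use the elementary first-moment union bound over $k$-subsets, $\Pr[X\ge k]\le \binom{n_0}{k}p^k$, crudely bounding $\binom{n_0}{k}\le 2^{n_0}$ to get $(N/B)^{c_1(1-c_0)}$. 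Both arguments rely on exactly the same hypotheses (independence of the per-block failure events, each with probability at most $2^{-2c_0}$) and both arrive at $(N/B)^{-c_1}$. What your version buys is self-containedness: it avoids invoking the appendix lemma, needs no side condition $\gamma > 2e$, and in fact already gives $(N/B)^{-c_1}$ for $c_0\ge 2$, so the stated threshold $c_0\ge 3$ is even more comfortably met. What the paper's version buys is uniformity with the many other Chernoff-style estimates it uses elsewhere, so it reuses a single black-box lemma. The alternative you mention in passing (multiplicative Chernoff with $1+\delta = 2^{2c_0-1}$) is, in essence, the paper's actual route.
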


\begin{proof}
The expected number of blocks that still remain is at most 
$(c_1/2^{2c_0})\log (N/B)$. 
Thus, if we let $X$ denote the number blocks that still remain, then,
by a Chernoff bound
(e.g., Lemma~\ref{lem:chernoff} from the appendix)
and the fact that $c_0\ge 3$,
we have he following:
\begin{eqnarray*}
\Pr\left( X > (c_1/2) \log (N/B)\right) 
&<& 
2^{-(c_1/2) \log (N/B) \, (2 c_0 - 3)} \\
&<& 
2^{-c_1\log (N/B)} \\
&=& (N/B)^{-c_1}.
\end{eqnarray*}
\end{proof}

So this lemma gives us a lower bound for $c_0$.
In addition,
if we take $c_1=d+2$, then the probability that there is
any of our regions of size $c_1\log (N/B)$ blocks with more than
$c_1\log (N/B)/2$ blocks still remaining is at most $(N/B)^{-(d+1)}$,
which gives us a high probability of success.

Since $c_1$ is a constant
and the number of blocks we can fit in memory is 
$m=M/B\ge \log^{\epsilon^2} (N/B)=\log^{\epsilon^2} n$, 
by our wide-block and tall-cache assumptions,
we can apply the data-oblivious
sorting algorithm of Lemma~\ref{lem:sub-sort}
to sort
each of the $O(n/log n)$ regions using $O(\log n)$ I/Os a piece, 
putting every
distinguished element before any unmarked elements.
Thus, the overall number of I/Os for such a step is linear.
Then, we compact each such region to its first $(c_1/2)\log n$ blocks.
This action therefore halves the size of the array, $A$. 
So we then repeat the above actions for the smaller array.
We continue these reductions until the number of the remaining set of
blocks is less than $n/\log^2_{m} n$, at which point we
completely compress the remaining array $A$ by using the
data-oblivious sorting algorithm of
Lemma~\ref{lem:sub-sort}
on the entire array.
This results in array of size at most $r$, which we concatenate to
the array, $C$, of size $4r$, to produce a compacted array of size
$5r$.
Therefore, we have the following.

\begin{theorem}
\label{thm:network2}
Given an array, $A$, of size $N$, such that at most $R<N/4$ 
of $A$'s elements are
marked as distinguished, we can compress the distinguished elements in $A$
into an array $B$ of size $5R$ 
using a data-oblivious algorithm that uses
$O(N/B)$ I/Os and succeeds with probability at
least $1-1/(N/B)^d$, for any given constant $d\ge 1$,
assuming $B\ge \log^\epsilon (N/B)$ and $M\ge B^{1+\epsilon}$,
for some constant $\epsilon>0$.
\end{theorem}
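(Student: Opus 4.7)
The plan is to combine the preceding pieces into a linear-I/O data-oblivious loose compaction that succeeds w.v.h.p. First, I apply Lemma~\ref{lem:consolidate} so that we can work at the block level with $n = \lceil N/B \rceil$ and $r = \lceil R/B \rceil$, treating each block as either full of distinguished items or empty. I then allocate $C$ of size $4r$ blocks and run $c_0$ data-oblivious $A$-to-$C$ thinning passes. Because the random target cell depends only on the position $i$ (or on fresh randomness that is independent of data), and because the ``write if empty and not yet placed'' branch and the ``rewrite the old value back, re-encrypted'' branch both touch exactly the same cells in the same order, the external access pattern is independent of which blocks of $A$ are distinguished; thus the entire thinning phase is data-oblivious and costs $O(c_0 n) = O(n)$ I/Os.

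Second, I invoke the correctness analysis: after $c_0$ rounds, the probability that any particular still-unplaced block fails to land in an empty cell is at most $1/2^{2c_0}$, independently across blocks (since we chose $4r$ cells and at most $r$ blocks). Partitioning $A$ into $O(n/\log n)$ windows of $c_1 \log n$ consecutive blocks and applying Lemma~\ref{lem:log-compress} with $c_1 = d+2$, each window has at most $(c_1/2)\log n$ remaining distinguished blocks except with probability $(N/B)^{-(d+2)}$; a union bound over all windows gives a failure probability at most $(N/B)^{-(d+1)}$ for this phase.

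Third, I use the wide-block and tall-cache assumptions ($B \ge \log^{\epsilon}(N/B)$, $M \ge B^{1+\epsilon}$) to conclude that $M/B \ge B^{\epsilon} \ge \log^{\epsilon^2} n$, so a whole window of $c_1 \log n$ blocks fits in private memory. For each window I apply the deterministic data-oblivious sort of Lemma~\ref{lem:sub-sort}, which for a window of size $\Theta(\log n)$ uses only $O(\log n)$ I/Os, pushing distinguished blocks to the front; I then retain only the first $(c_1/2)\log n$ cells, thereby halving the size of the residual array at a total cost of $O(n)$ I/Os. Iterating this halving step yields a geometric series $O(n) + O(n/2) + \cdots = O(n)$ I/Os, and a union bound over the $O(\log n)$ iterations keeps the total failure probability below $(N/B)^{-d}$. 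Once the residual shrinks to $O(n / \log^2_{M/B} n)$ blocks, one more application of Lemma~\ref{lem:sub-sort} to the entire residual compresses it to exactly $r$ blocks using $O(n)$ I/Os; concatenating this with $C$ produces the final output of size $5r$ blocks, i.e., $5R$ cells.

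The main obstacle is the simultaneous balancing of the three constants $c_0$, $c_1$, and $\epsilon$: $c_0$ must be large enough (at least $3$ by Lemma~\ref{lem:log-compress}) that each thinning pass' per-block failure probability beats the window-level bound; $c_1$ must be large enough that the window union bound gives the desired $(N/B)^{-d}$; and the wide-block/tall-cache constants must be large enough that a $c_1 \log n$-block window actually fits in $M/B$ and that the base-case sort still costs $O(n)$ I/Os. Verifying that all of these inequalities hold consistently as $n$ shrinks through the halving iterations, and that the geometric sum of I/Os and the union-bound accumulation of failure probability both stay under the stated bounds, is the bookkeeping-heavy piece of the argument; the algorithmic ideas themselves follow directly from Lemmas~\ref{lem:consolidate}, \ref{lem:sub-sort}, and~\ref{lem:log-compress}.
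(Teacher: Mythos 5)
Your proposal follows essentially the same route as the paper's proof: consolidate via Lemma~\ref{lem:consolidate}, allocate $C$ of size $4r$ and run $c_0$ thinning passes, invoke Lemma~\ref{lem:log-compress} on $c_1\log n$-block windows and halve the residual by sorting each window with Lemma~\ref{lem:sub-sort}, iterate the halving as a geometric series, and finish with one full sort once the residual drops below $n/\log^2_{M/B}n$, concatenating the result of size $\le r$ with $C$ to reach $5r$ blocks. One small overstatement worth fixing: from $M/B \ge B^{\epsilon} \ge \log^{\epsilon^2} n$ it does \emph{not} follow that a window of $c_1\log n$ blocks fits in private memory (for $\epsilon < 1$, $\log^{\epsilon^2} n \ll c_1\log n$); what actually holds, and what your appeal to Lemma~\ref{lem:sub-sort} silently uses, is that $\log_{M/B}(c_1\log n) = O(1/\epsilon^2)$ is a constant, so sorting each window costs $O(\log n \cdot \log^2_{M/B}(\log n)) = O(\log n)$ I/Os even though the window does not fit in cache — the algorithm and I/O bound are unaffected, only the stated justification.
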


\begin{proof}
To establish the claim
on the number of I/Os
for this algorithm, note
that the I/Os needed for all the iterations
are proportional to the geometric sum,
\[
n + n/2 + n/4 + \cdots + n/\log^2_m n,
\]
which is $O(n)=O(N/B)$.
In addition, when we complete the compression using the deterministic
sorting algorithm of Lemma~\ref{lem:sub-sort},
we use 
\[
O((n/\log^2_m n)\log^2_m (n/\log^2_m n))= O(n)=O(N/B)
\]
I/Os, under our weak wide-block and tall-cache assumptions.
Thus, the entire algorithm uses $O(N/B)$ I/Os and succeeds with very
high probability.
\end{proof}

Incidentally, we can also show that it is also possible to perform 
loose compaction
without the wide-block and tall-cache assumptions, albeit with
a slightly super-linear number of I/Os.
Specifically, we prove the following in the full version of this
paper.

\begin{theorem}
\label{thm:logstar}
Given an array, $A$, of size $N$, such that at most $R<N/4$ 
of $A$'s elements are
marked as distinguished, we can compress the distinguished elements in $A$
into an array $B$ of size $4.25R$ 
using a data-oblivious algorithm that uses
$O((N/B)\log^* (N/B))$ I/Os and succeeds with probability at
least $1-1/(N/B)^d$, for any given constant $d\ge 1$,
assuming only that $B\ge 1$ and $M\ge 2B$.
\end{theorem}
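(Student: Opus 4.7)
The plan is to modify the algorithm of Theorem~\ref{thm:network2} to eliminate its sole use of the wide-block and tall-cache assumptions, namely the inner sort of each $c_1\log n$-block region via Lemma~\ref{lem:sub-sort}. Without those assumptions, sorting such a region via Theorem~\ref{thm:network} would cost $\Omega(\log n\log\log n)$ I/Os, which is too slow. My idea is to replace the inner sort with a \emph{recursive} invocation of the loose-compaction algorithm itself, so that within each $c_1\log n$-block region we recompact using the same machinery on a problem of size $\log n$ rather than $n$.

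With $n=N/B$ and $r=R/B$, the algorithm proceeds in outer rounds, with round-$k$ array size $n_k=n/2^k$. In round $k$ I would (i) consolidate the current residual array via Lemma~\ref{lem:consolidate}; (ii) run $c_0\ge 3$ thinning passes into a fresh table $C_k$ of capacity $4$ times the current distinguished count, so that a tightened Lemma~\ref{lem:log-compress} guarantees every $c_1\log n_k$-block region has residual occupancy at most $\rho\le 1/17$ w.v.h.p.; (iii) recursively loose-compact each region to size at most $4.25\rho\, c_1\log n_k\le(c_1/2)\log n_k$, thereby halving the residual array; and (iv) iterate, finishing with a direct Lemma~\ref{lem:sub-sort} call once the residual has size at most $O(n/\log^2 n)$. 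Since $\rho\le 1/17$, the total output size is bounded by $\sum_{k\ge 0}|C_k|\le 4r/(1-\rho)\le 4.25r$, which gives the claimed $4.25R$ output bound.

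The I/O cost satisfies
\[
T(n)\;\le\;c\cdot n\;+\;\frac{n}{c_1\log n}\,T(c_1\log n),
\]
because each round's thinning and halving contribute $O(n_k)$ I/Os, its $n_k/(c_1\log n_k)$ recursive region-calls contribute $T(c_1\log n_k)$ each, and the geometric sum in $n_k$ is dominated by its $k=0$ term. Letting $f(n)=T(n)/n$, this recurrence becomes $f(n)\le c+f(c_1\log n)$, which unwinds in $O(\log^* n)$ iterated-log steps to give $f(n)=O(\log^* n)$, and hence $T(n)=O((N/B)\log^*(N/B))$. The base case (constant-size problem) is handled directly in private memory using only $M\ge 2B$.

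The main technical obstacle will be propagating the $1-1/(N/B)^d$ success bound through the $\Theta(\log^* n)$ nested levels of recursion. Each recursive instance has its own ``$n$'' equal to an iterated logarithm of the original $N/B$, and the Chernoff bound of Lemma~\ref{lem:log-compress} stated in terms of that tiny local parameter is far too weak to survive a union bound over the full recursion tree. The fix is to inflate the Chernoff constant $c_1$ in Lemma~\ref{lem:log-compress} so that every Chernoff event at every recursion level succeeds with probability at least $1-1/(N/B)^{d+O(1)}$ against the \emph{original} $N/B$; a single union bound over all $O(n\log^* n)$ sub-instances then yields the claimed $1-1/(N/B)^d$ overall.
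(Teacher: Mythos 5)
Your approach is genuinely different from the paper's, but it contains a gap that you already sensed and whose proposed fix does not work.

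\textbf{What you do differently.} The paper does not self-recurse on regions. Its proof proceeds in $O(\log^* n)$ \emph{phases} indexed by a tower-of-twos sequence $t_1=4$, $t_{i+1}=2^{t_i}$, with the invariant that at most $r/t_i^4$ distinguished blocks survive into phase $i$. In phase $i$ it thins into an auxiliary array and then into $D$, divides the residual into regions of size $2^{4t_i}$ (not $c_1\log n$), and compacts each region with the IBLT-based tight compaction of Theorem~\ref{thm:tight-sparse} (not a recursive call). Crucially, it does \emph{not} demand that every region be well-behaved: some regions may be over-crowded and some IBLT compactions may fail, and Lemma~\ref{lem:crowded} bounds the \emph{aggregate} number of elements that slip through by a Chernoff bound whose parameter is $\Omega(r/2^{O(t_i)})=\Omega(n/\mathrm{poly}\log n)$, which is polynomially strong in $n$. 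Those elements are simply retried in the next phase. The union bound is over only $O(\log^* n)$ phase-level events, each failing with probability $\le n^{-(d+1)}$.

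\textbf{Where your fix breaks.} You correctly flag that at depth $i$ of your recursion the subproblem size is an iterated logarithm $\sigma_i\approx\log^{(i)}n$, so a Chernoff bound on a region of $c_1\log\sigma_i$ blocks can give failure probability at best $\exp(-\Theta(c_1\log\sigma_i))=\sigma_i^{-\Theta(c_1)}$. For this to be $\le n^{-(d+1)}$, you need $c_1\gtrsim(d+1)\log n/\log\sigma_i$. Already at depth $1$, where $\sigma_1=\Theta(\log n)$, this forces $c_1=\Omega(\log n/\log\log n)$, which is \emph{not} a constant. Worse, with that choice the region size $c_1\log\sigma_1=\Theta(\log n)$ equals the subproblem size, so the recursion makes no progress; iterating, the subproblem size stalls at $\Theta(\log n)$ and never reaches a base case solvable in $O(\sigma)$ I/Os without the wide-block/tall-cache assumptions. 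So ``inflate $c_1$'' cannot simultaneously keep every per-region Chernoff event polynomially rare in $n$ and keep the recursion well-founded with an $O(n\log^* n)$ I/O budget. The missing idea --- the one the paper supplies --- is to let individual region compactions fail and instead Chernoff-bound the \emph{total} count of failing elements over all $\Theta(n/\mathrm{poly}\log n)$ regions, pushing failures forward into the next phase rather than trying to eliminate them level by level.

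Two secondary remarks. First, your recursive region-compactions need a precondition of the form $R<N/4$; with residual density $\rho\le 1/17$ this holds, but you should state it. Second, your $4.25r$ accounting via $\sum_k|C_k|\le 4r/(1-\rho)$ charges the geometric sum over outer rounds but not the nested recursive outputs; in the paper this is handled cleanly because regions are compacted \emph{tightly} (Theorem~\ref{thm:tight-sparse}), so the only overhead above $4r$ comes from one final $0.25r$-sized suffix, not from compounding $4.25$ factors.
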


\section{Selection and Quantiles}

\paragraph{Selection.}
Suppose we are given an array $A$ of $n$ comparable items and want to
find the $k$th smallest element in $A$.
For each element $a_i$ in $A$, 
we mark $a_i$ as distinguished with probability
$1/n^{1/2}$. Assuming that there are at most $n^{1/2}+n^{3/8}$ distinguished
items, we then compress all the distinguished items in $A$
to an array, $C$, of size $n^{1/2}+n^{3/8}$ 
using the method of Theorem~\ref{thm:tight-sparse},
which runs in $O(n)$ time in this case and, as we prove,
succeeds with high probability.
We then sort the items in $C$ using a data-oblivious algorithm,
which can be done in $O(n^{1/2}\log^2 n)$ time by Lemma~\ref{lem:sub-sort},
considering
empty cells as holding $+\infty$.
We then scan this sorted array, $C$.
During this scan, we save in our internal registers,
items, $x'$ and $y'$, with ranks
$\lceil k/n^{1/2}\,-\,n^{3/8}\rceil$ 
and $|C| - \lceil (n-k)/n^{1/2}\,-\,2n^{3/8}\rceil$, respectively, in $C$, 
if they exist.
If $x'$ (resp., $y'$) does not exist, then we set $x'=-\infty$
(resp., $y'=+\infty$).
We then scan $A$ to find $x''$ and $y''$,
the smallest and largest elements in $A$, respectively.
Then we set $x=\max\{x',x''\}$ and $y=\min\{y',y''\}$.
We show below that, w.v.h.p.,
the $k$th smallest element
is contained in the range $[x,y]$ and there are $O(n^{7/8})$ items 
of $A$ in this range.

\begin{lemma}
\label{lem:sample}
There are more than $n^{1/2}+n^{3/8}$ elements in $C$ with probability at
most $e^{-n^{1/4}/3}$, and fewer than $n^{1/2}-n^{3/8}$ elements in
$C$ with probability at most $e^{-n^{1/4}/2}$.
\end{lemma}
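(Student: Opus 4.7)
The plan is to recognize that the size of $C$ (interpreted as the number of distinguished items from $A$) is a sum of $n$ independent Bernoulli random variables, one per element of $A$, each with success probability $p = n^{-1/2}$. I would denote this count by $X$ and observe that $\mu := \mathbf{E}[X] = n \cdot n^{-1/2} = n^{1/2}$. So both tail bounds in the lemma correspond to deviations of $\pm n^{3/8}$ from the mean, i.e. a relative deviation $\delta := n^{3/8}/n^{1/2} = n^{-1/8}$, which is in the range $(0,1)$ where standard multiplicative Chernoff bounds apply.

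Next I would invoke the standard Chernoff inequalities for sums of independent $\{0,1\}$ variables (the very ones referenced in Lemma~\ref{lem:chernoff} of the appendix, also used in the proof of Lemma~\ref{lem:log-compress}): $\Pr(X > (1+\delta)\mu) \le e^{-\delta^2 \mu / 3}$ and $\Pr(X < (1-\delta)\mu) \le e^{-\delta^2 \mu / 2}$. Plugging in $\delta^2 \mu = n^{-1/4} \cdot n^{1/2} = n^{1/4}$ immediately yields the claimed upper-tail bound $e^{-n^{1/4}/3}$ and lower-tail bound $e^{-n^{1/4}/2}$.

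There is essentially no obstacle here; the only mild subtlety is to confirm that the marking events for distinct elements of $A$ are mutually independent, which they are by construction (the algorithm flips an independent biased coin for each $a_i$), so that a Chernoff bound applies directly rather than requiring a negative-correlation or Azuma-type argument. I would close the proof by noting that the two-sided bound together justifies the earlier statement that the compaction into an array of size $n^{1/2}+n^{3/8}$ does not overflow with high probability, and that $|C|$ is also not too small, which is what the subsequent rank-selection arguments need.
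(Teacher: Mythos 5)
Your proof is correct and matches the paper's argument essentially verbatim: both compute $\mu = E[X] = n^{1/2}$, set $\delta = n^{-1/8}$, and apply the two standard multiplicative Chernoff tails $e^{-\delta^2\mu/3}$ and $e^{-\delta^2\mu/2}$ (the paper cites these as Theorems 4.4 and 4.5 of Mitzenmacher--Upfal). One small caveat: your parenthetical attribution to Lemma~\ref{lem:chernoff} of the appendix is not quite right, since that lemma is the large-deviation form requiring $\gamma > 2e$, whereas here $\gamma = 1 + n^{-1/8}$ is close to $1$; the bounds you actually invoke are the textbook small-$\delta$ Chernoff bounds, which is what the paper uses too.
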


\begin{proof}
Let $X$ denote the number of elements of $A$ chosen to belong to $C$.
Noting that each element of $A$ is chosen independently with
probability $1/n^{1/2}$ to belong to $C$,
by a standard Chernoff bound (e.g., see~\cite{mu-pcrap-05}, Theorem 4.4),
\begin{eqnarray*}
\Pr(X>n^{1/2}+n^{3/8}) &=& \Pr(X>(1+n^{-1/8})n^{1/2}) \\
	&\le& e^{-n^{1/4}/3} ,
\end{eqnarray*}
since $E(X)=n^{1/2}$.
Also, by another standard Chernoff bound
(e.g., see~\cite{mu-pcrap-05}, Theorem 4.5),
\begin{eqnarray*}
\Pr(X<n^{1/2}-n^{3/8}) &=& \Pr(X<(1-n^{-1/8})n^{1/2}) \\
	&\le& e^{-n^{1/4}/2} .
\end{eqnarray*}
\end{proof}

Recall that we then performed a scan, where we save in our internal registers,
items, $x'$ and $y'$, with ranks
$\lceil k/n^{1/2}\,-\,n^{3/8}\rceil$ 
and $|C| - \lceil (n-k)/n^{1/2}\,-\,n^{3/8}\rceil$, respectively, in $C$, 
if they exist.
If $x'$ (resp., $y'$) does not exist, then we set $x'=-\infty$
(resp., $y'=+\infty$).
We then scan $A$ to find $x''$ and $y''$,
the smallest and largest elements in $A$, respectively.
Then we set $x=\max\{x',x''\}$ and $y=\min\{y',y''\}$.

\begin{lemma}
\label{lem:select}
With probability at least 
$1-2e^{-n^{1/8}/9}-e^{-4n^{3/8}/5}-e^{-n^{1/4}/3}-e^{n^{1/4}/2}$, 
the $k$th smallest element of $A$ is contained in the range $[x,y]$ and
there are at most $8n^{7/8}$ items of $A$ in the range $[x,y]$.
\end{lemma}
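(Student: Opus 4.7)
The plan is to decompose the failure event into four sub-events and bound each via a Chernoff bound on a binomial arising from the independent Bernoulli sampling that defines $C$. Let $S_j$ denote the number of sampled elements among the $j$ smallest elements of $A$; since each element is included in $C$ independently with probability $n^{-1/2}$, we have $S_j\sim\mathrm{Bin}(j,n^{-1/2})$, and the rank in $A$ of any element of $C$ is determined by the counts $\{S_j\}$. Write $r_x=\lceil k/n^{1/2}-n^{3/8}\rceil$ and $s_y=\lceil(n-k)/n^{1/2}-n^{3/8}\rceil$ so that $r_y=|C|-s_y$, and let $R(z)$ denote the rank of an element $z$ in $A$. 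The four failure events I union-bound are (I) $|C|\notin[n^{1/2}-n^{3/8},\,n^{1/2}+n^{3/8}]$, (II) $R(x')>k$ (equivalently $S_k<r_x$), (III) $R(y')<k$, and (IV) $R(y')-R(x')>8n^{7/8}$. If none occurs, then $x'\le a_k\le y'$; combined with $x''\le a_k\le y''$ this forces $x\le a_k\le y$, and the number of elements of $A$ in $[x,y]\subseteq[x',y']$ is at most $R(y')-R(x')+1\le 8n^{7/8}$.

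Event (I) is bounded directly by Lemma~\ref{lem:sample}. Events (II) and (III) are symmetric. For (II), the boundary case $k<n^{7/8}$ forces $r_x\le 0$, so $x'=-\infty$ and (II) is vacuous; otherwise $k/n^{1/2}\ge n^{3/8}$, and the standard multiplicative Chernoff bound on $\mathrm{Bin}(k,n^{-1/2})$ gives an exponentially small probability of a deviation of $n^{3/8}$ below its mean $k/n^{1/2}$. The same argument applied to $\mathrm{Bin}(n-k+1,n^{-1/2})$ handles (III). For (IV), I condition first on (I) failing, so that $r_y-r_x\le 3n^{3/8}$ deterministically, and then on any fixed value of $R(x')$; by independence of the sampling across disjoint rank intervals, the number of samples in the next $8n^{7/8}$ positions beyond $R(x')$ is an independent $\mathrm{Bin}(8n^{7/8},n^{-1/2})$ with mean $8n^{3/8}$, and event (IV) requires this binomial to fall short of $r_y-r_x\le 3n^{3/8}$, a $5n^{3/8}$ deviation below mean, which Chernoff bounds uniformly by roughly $e^{-4n^{3/8}/5}$. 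A union bound over the four events then yields the claim.

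The main obstacle is the coupling between $|C|$, $r_y-r_x$, and the realized positions of $x',y'$: both target ranks depend on $|C|$, so the binomial targets themselves fluctuate. I neutralize this by conditioning first on event (I) failing, which makes $r_y-r_x$ deterministically bounded by $3n^{3/8}$ and turns (IV) into a clean Chernoff estimate on an independent binomial. A second subtlety is that (IV) naively requires us to know the random value $R(x')$; this is handled by the uniformity of the Chernoff bound across conditioning values, which is a direct consequence of the memoryless nature of independent Bernoulli sampling across disjoint index ranges.
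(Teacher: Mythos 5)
Your proposal is correct in outcome but takes a genuinely different route from the paper. The paper models ranks via sums of independent geometric random variables (the spacings between consecutive sampled elements of $C$) and invokes its bespoke negative-binomial Chernoff bound, Lemma~\ref{lem:chernoff2}. You instead work with the dual object, the binomial counts $S_j\sim\mathrm{Bin}(j,n^{-1/2})$, and apply the standard multiplicative Chernoff bound. The two are equivalent (the event $\{X>m\}$ for a sum of $k'$ geometrics is the event $\{\mathrm{Bin}(m,p)<k'\}$), but your formulation is more elementary: it avoids needing Lemma~\ref{lem:chernoff2} at all and arguably makes the independence structure cleaner. The trade-off is that your exponents will not reproduce the paper's literally. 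The paper's constants are hard-wired from the geometric Chernoff bound (e.g., the $e^{-n^{1/8}/9}$ and $e^{-4n^{3/8}/5}$ terms); your binomial bounds for events (II) and (III) come out at least as strong, so the stated inequality survives, but you should say explicitly that you are dominating, not matching, the lemma's expression.

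Two small points to tighten. First, the conditioning for event (IV) should be a containment rather than a double conditioning: write $\mathrm{IV}\cap\neg\mathrm{I}\subseteq\{$fewer than $3n^{3/8}$ samples land in positions $R(x')+1,\dots,R(x')+8n^{7/8}\}$, then decompose over $R(x')=j$ --- conditioning on $R(x')=j$ alone leaves the samples at positions $>j$ i.i.d.~$\mathrm{Bernoulli}(n^{-1/2})$, whereas further conditioning on $\neg\mathrm{I}$ (a statement about $|C|$) would reintroduce dependence. Your phrase ``condition first on (I) failing ... and then on any fixed value of $R(x')$'' suggests simultaneous conditioning, which is not quite what you need; the uniformity-in-$j$ observation you make at the end is the right resolution, so this is a presentation issue more than a mathematical one. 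Second, you dispose of the boundary case $r_x\le 0$ by noting (II) becomes vacuous, but you should also say a word about (IV) there: with $x'=-\infty$ the interval $[x,y]$ equals $[x'',y']$, so the quantity to control is $R(y')$ alone, and the same binomial deviation bound applies because on $\neg\mathrm{I}$ the target rank $r_y$ is at most $O(n^{3/8})$. The paper handles this case (with a slightly different threshold, $k\le 2n^{7/8}$) by a parallel remark, so the omission in your write-up is minor but worth filling in for a self-contained proof.
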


\begin{proof}
Let us first suppose that $k\le 2n^{7/8}$.
In this case, we pick the smallest element in $A$ for $x$, in which case
$k\ge x$.

So, to consider the remaining possibility that $x>k$, for $k > 2n^{7/8}$.
We can model the number of elements of $A$ less than or equal to $x$ as 
the sum, $X$, of $k'=k/n^{1/2}-n^{3/8}$ 
independent geometric random variables with
parameter $p=n^{-1/2}$.
The probability that $k$ is less than or equal to $x$ is bounded by
\[
\Pr(X > k) = \Pr(X > (n^{1/2} +  t)k'),
\]
where $t=n^{7/8}/(k/n^{1/2}-n^{3/8})$; hence,
$t \ge n^{3/8}$ and $k'>n^{3/8}$.
If $t/n^{1/2}<1/2$, then,
by a Chernoff bound 
(e.g., Lemma~\ref{lem:chernoff2} from the appendix), we have the following: 
\begin{eqnarray*}
\Pr(X > k) &<& e^{-(t^2/n)k'/3} \\
           &<& e^{-n^{1/8}/3}.
\end{eqnarray*}
Similarly, if $t/n^{1/2}\ge 1/2$, then
\begin{eqnarray*}
\Pr(X > k) &<& e^{-(t/n^{1/2})k'/9} \\
           &<& e^{-n^{1/8}/9}.
\end{eqnarray*}
Thus, in either case, the probability that the $k$th smallest element
is greater than $x$ is bounded by this latter probability.
By a symmetric argument, assuming $|C|\ge n^{1/2}-n^{3/8}$,
the probability that there are the more than 
$(n-k)$ elements of $A$ greater than $y$,
for $k<n-2n^{7/8}$, 
is also bounded by this latter probability 
(note that, for $k\ge n-2n^{7/8}$, it is trivially true that 
the $k$th smallest element is less than or equal to $y$).
So, with probability at least $1-2e^{-n^{1/8}/9}-e^{-n^{1/4}/2}$, 
the $k$th smallest
element is contained in the range $[x,y]$.

Let us next consider the number of elements of $A$ in the range $[x,y]$.
First, note that, probability at least $1-e^{n^{1/4}/3}$,
there are at most $n'=4n^{3/8}$ elements of the random sample, $C$, in this
range; hence, we can model the number of elements of
$A$ in this range as being bounded by the sum, $Y$, of $n'$ geometric
random variables with parameter $p=1/n^{1/2}$.
Thus, by a Chernoff bound
(e.g., Lemma~\ref{lem:chernoff2} from the appendix), we have the following: 
\begin{eqnarray*}
\Pr(Y > 8n^{7/8}) &=& \Pr(Y > (n^{1/2} + n^{1/2})n') \\
	   &<& e^{-n'/5} \\
           &=& e^{-4n^{3/8}/5}.
\end{eqnarray*}
This gives us the lemma.
\end{proof}

So, we make an addition scan of $A$ to mark the elements in $A$ that
are in the range $[x,y]$, and we then compress these items to an
array, $D$, of size $O(n^{7/8})$, using the method of
Theorem~\ref{thm:tight-sparse}, which runs in $O(n)$ time in this case.
In addition, we can determine the rank, $r(x)$, of $x$ in $A$. Thus,
we have just reduced the problem to returning the item in $D$
with rank $k-r(x)+1$. We can solve this problem by sorting $D$ using
the oblivious sorting method of Lemma~\ref{lem:sub-sort}
followed by a scan to obliviously select the item with this rank.
Therefore, we have the following:

\begin{theorem}
\label{thm:select}
Given an integer, $1\le k\le n$,
and an array, $A$, of $n$ comparable items, we can select the $k$th
smallest element in $A$ in $O(n)$ time using a data-oblivious algorithm
that succeeds with probability at least $1-n^{-d}$, for any given
constant $d>0$.
\end{theorem}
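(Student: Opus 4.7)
The plan is to assemble the sampling-and-pruning algorithm sketched in the paragraphs preceding the theorem into a single data-oblivious selection procedure and verify (i) that every step is oblivious, (ii) that the total cost is $O(n)$, and (iii) that Lemmas~\ref{lem:sample} and~\ref{lem:select} together yield the claimed success probability.

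First, I would perform an oblivious left-to-right scan of $A$ in which each element is independently marked ``distinguished'' with probability $n^{-1/2}$, using a random oracle applied to the index (so that the access pattern is independent of the data). By Lemma~\ref{lem:sample}, the number of distinguished elements lies in $[\,n^{1/2}-n^{3/8},\, n^{1/2}+n^{3/8}\,]$ except with probability exponentially small in $n^{1/4}$. I would then invoke the sparse tight order-preserving compaction of Theorem~\ref{thm:tight-sparse} with the \emph{fixed} capacity $r = n^{1/2}+n^{3/8}$, producing $C$ in $O(n + r\log^2 r)=O(n)$ time, and sort $C$ using Lemma~\ref{lem:sub-sort} in $O(n^{1/2}\log^2 n)\subset O(n)$ time. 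Two oblivious scans then suffice to read off $x'$ and $y'$ at the prescribed ranks in $C$, to compute the extrema $x''$ and $y''$ of $A$, and to set $x=\max\{x',x''\}$ and $y=\min\{y',y''\}$.

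Second, appealing to Lemma~\ref{lem:select}, with probability at least $1 - O(e^{-n^{1/8}/9})$ the target rank-$k$ element lies in $[x,y]$ and at most $8n^{7/8}$ elements of $A$ fall in this range. I would then perform one more oblivious scan of $A$ that marks every element falling in $[x,y]$ and simultaneously accumulates the rank $r(x)$ of $x$ in $A$. A second application of Theorem~\ref{thm:tight-sparse}, this time with the fixed capacity $8n^{7/8}$, compacts the marked items into an array $D$ in $O(n + n^{7/8}\log^2 n)=O(n)$ time, after which Lemma~\ref{lem:sub-sort} sorts $D$ in $O(n^{7/8}\log^2 n)\subset O(n)$ time. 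An oblivious scan of $D$ returns the element of rank $k-r(x)+1$, which is the desired answer.

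The main obstacle is that Theorem~\ref{thm:tight-sparse} requires a fixed capacity in advance, yet the actual number of distinguished items is random, so we cannot data-obliviously branch on the ``bad'' event that a capacity is exceeded. My plan is to always run the two compactions at the a priori capacities $r=n^{1/2}+n^{3/8}$ and $R'=8n^{7/8}$; if either is exceeded, or if the internal Bloom-table listing fails, the algorithm silently emits a canonical default output, so the access sequence depends only on $n$ and $k$. Tuning the constants $c$ in Theorem~\ref{thm:tight-sparse} and summing the failure probabilities from Lemmas~\ref{lem:sample} and~\ref{lem:select} (each of which is either exponentially small in a fractional power of $n$ or at most $n^{-c}$) gives a total failure probability of at most $n^{-d}$ for any desired constant $d$. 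The access pattern is a concatenation of oblivious scans and of oblivious subroutines from Theorem~\ref{thm:tight-sparse} and Lemma~\ref{lem:sub-sort}, so data-obliviousness of the whole algorithm follows directly, and the per-stage costs sum to $O(n)$.
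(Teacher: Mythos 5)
Your proposal matches the paper's proof essentially step for step: sample at rate $n^{-1/2}$, compact into $C$ via Theorem~\ref{thm:tight-sparse}, sort $C$ with Lemma~\ref{lem:sub-sort}, extract the bracket $[x,y]$, use Lemma~\ref{lem:select} to justify that the $k$th element lies there with only $O(n^{7/8})$ candidates, compact, sort, and scan. Your added remark about always running compactions at fixed a priori capacities and emitting a canonical output on overflow (so the access sequence never branches on the bad event) is a correct and useful clarification of the obliviousness argument, but it does not change the route taken.
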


Note that the running time (with a very-high success probability)
of this method beats the $\Omega(n\log\log n)$
lower bound of Leighton {\it et al.}~\cite{lms-pnsms-97},
which applies to any high-success-probability randomized 
data-oblivious algorithm based on the exclusive use of
compare-exchange as the primitive data-manipulation operation.
Our method is data-oblivious, but it also uses primitive operations
of copying, summation, and random hashing.
Thus, Theorem~\ref{thm:select} demonstrates the power of using these
primitives in data-oblivious algorithms.
In addition, by substituting data-oblivious 
external-memory compaction and sorting steps for the internal-memory
methods used above, we get the following:

\begin{theorem}
\label{thm:select2}
Given an integer, $1\le k\le N$,
and an array, $A$, of $N$ comparable items, we can select the $k$th
smallest element in $A$ using $O(N/B)$ I/Os with a data-oblivious 
external-memory algorithm
that succeeds with probability at least $1-(N/B)^{-d}$, for any given
constant $d>0$, assuming only that $B\ge 1$ and $M\ge 2B$.
\end{theorem}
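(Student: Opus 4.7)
The plan is to mirror the proof of Theorem~\ref{thm:select} step-by-step, replacing every RAM-level primitive with its external-memory counterpart and measuring all sizes in units of blocks. Let $n=N/B$. Since our only assumed resources are $B\ge 1$ and $M\ge 2B$, we cannot invoke the $O(N/B)$-I/O loose compaction of Theorem~\ref{thm:network2} (which needs the wide-block and tall-cache assumptions), so the algorithm must be driven entirely by Theorem~\ref{thm:tight-sparse} applied at the block level, treating a whole block as a single ``item'' in the sense explicitly permitted just before that theorem's statement.

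I would proceed as follows. A single oblivious scan of $A$ ($O(n)$ I/Os, re-encrypting every block it touches) independently marks each item as distinguished with probability $1/n^{1/2}$, giving about $(NB)^{1/2}$ distinguished items in expectation. Applying Lemma~\ref{lem:consolidate} packs these into at most $O(n^{1/2})$ full blocks plus one partial block in $O(n)$ I/Os. Feeding the resulting block-sparse array into Theorem~\ref{thm:tight-sparse} at the block level (with ``$n$'' equal to $n$ blocks and ``$r$'' equal to $O(n^{1/2})$ blocks) yields a tightly-packed sample $C$ of $O(n^{1/2})$ blocks in $O(n + n^{1/2}\log^2 n^{1/2}) = O(n)$ I/Os. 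Sorting $C$ by Lemma~\ref{lem:sub-sort} costs $O(n^{1/2}\log^2_{M/B} n^{1/2})$ I/Os, which is $o(n)$ even when $M/B$ is only a small constant. We then reproduce the pivot logic of Theorem~\ref{thm:select}: scans of $C$ and $A$ extract $x'$, $y'$, $x''$, $y''$, and we set $x=\max\{x',x''\}$, $y=\min\{y',y''\}$; a further scan of $A$ marks the items in $[x,y]$, which number $O(n^{7/8})$ with very high probability by a block-scale variant of Lemma~\ref{lem:select}. These marked items are consolidated, compacted via a second call to Theorem~\ref{thm:tight-sparse} at the block level, sorted by Lemma~\ref{lem:sub-sort}, and scanned once more to extract the answer---again in $O(n)$ total I/Os.

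The main obstacle will be re-deriving the concentration bounds of Lemmas~\ref{lem:sample} and~\ref{lem:select} in this ``per-item sampling, per-block payment'' regime: the sampling rate is set against the item count $N$, but the compaction affordability threshold is measured in the block count $n=N/B$. One must verify (i) that the number of distinguished blocks surviving consolidation is $O(n^{1/2})$ with probability at least $1-(N/B)^{-d}$, so that the first call to Theorem~\ref{thm:tight-sparse} stays within its $O(n)$-I/O regime; and (ii) that the second consolidation produces only $O(n^{7/8})$ blocks with the same tail bound. Both follow by the same Chernoff and geometric-sum arguments used in the internal-memory proof, re-expressed with $n=N/B$ playing the role of the original $n$, and a union bound over the two bad events (together with the failure probability of Theorem~\ref{thm:tight-sparse}) still yields overall failure probability $O((N/B)^{-d})$. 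Data-obliviousness is inherited automatically, since every primitive we invoke---linear scans, Lemma~\ref{lem:consolidate}, Theorem~\ref{thm:tight-sparse}, and Lemma~\ref{lem:sub-sort}---is itself data-oblivious, and none of the sizes passed between phases depends on the input values once the high-probability assumption on the sample size is in force.
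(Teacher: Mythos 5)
Your proposal is correct and takes essentially the same approach as the paper, which gives Theorem~\ref{thm:select2} only a one-line justification (``substituting data-oblivious external-memory compaction and sorting steps for the internal-memory methods''); you have simply filled in the implied details. You correctly identify that the compaction calls should be Theorem~\ref{thm:tight-sparse} applied at block granularity (as the paper explicitly licenses just before that theorem), rather than Theorem~\ref{thm:network2}, which would require the wide-block and tall-cache assumptions that Theorem~\ref{thm:select2} deliberately avoids, and the re-scaled Chernoff bounds go through as you indicate.
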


\paragraph{Quantiles.}
Let us now consider the problem of selecting $q$ quantile
elements from an array $A$ using an external-memory data-oblivious
algorithm, for the case when $q\le (M/B)^{1/4}$, which will be
sufficient for this algorithm to prove useful for our external-memory
sorting algorithm, which we describe in Section~\ref{sec:sort}.

If $(M/B)> (N/B)^{1/4}$, then we sort $A$ using the deterministic
data-oblivious algorithm of Lemma~\ref{lem:sub-sort},
which uses $O(N/B)$ I/Os in this case. Then we simply read out the
elements at ranks that are multiples of $N/(q+1)$, rounded to integer
ranks.

Let us therefore suppose instead that $(M/B)\le (N/B)^{1/4}$;
so $q\le (N/B)^{1/16}$.
In this case, we randomly choose each element of $A$ to belong to a
random subset, $C$, with probability $1/N^{1/4}$.
With high probability, there are at most $N^{3/4}\pm N^{1/2}$ 
such elements, so we can compact them into an array, $C$, of capacity
$N^{3/4}+N^{1/2}$ by Theorem~\ref{thm:tight-sparse}, using $O(N/B)$ I/Os.
We then sort this array and compact it down to capacity $N^{3/4}+N^{1/2}$,
using Lemma~\ref{lem:sub-sort}, and let $|C|$ denote its actual size (which
we remember in private memory).
We then scan
this sorted array, $C$, and read into Alice's memory every 
item, $x_i$, with rank 
that is a multiple of
$({\hat n}/(q+1)) - N^{1/2}$, rounded to integer ranks, where 
${\hat n} = N^{3/4}$, with the exception that $x_1$ is the smallest element
in $A$. 
We also select items at ranks,
$y_i=|C|-(N^{3/4}-N^{3/4}i/(q+1) - 2N^{1/2})$, rounded to integer ranks,
with the exception that $y_q$ is the largest element in $A$.
Let $[x_i,y_i]$ denote each such pair of such items,
which, as we show below, will surround a value,
${\hat n}/(q+1)$, with high probability.
In addition, as we also show in the analysis,
there are at most $8N^{3/4}$ elements of $A$ in
each interval $[x_i,y_i]$, with high probability.
That is, there are at most $O(N^{13/16})$ elements of $A$ in any
interval $[x_i,y_i]$, with high probability.
Storing all the $[x_i,y_i]$ intervals in Alice's memory,
we scan $A$ to identify for each element in $A$ if it is contained
in such an interval $[x_i,y_i]$, marking it with $i$ in this case,
or if it is outside every such interval.
During this scan we also maintain counts (in Alice's memory)
of how many elements of $A$
fall between each consecutive pair of intervals, $[x_i,y_i]$ and
$[x_{i+1}, y_{i+1}]$, and how many elements fall inside each 
interval $[x_i,y_i]$.
We then compact all the elements of $A$ that are inside such
intervals into an array $D$ of size
$O(N^{13/16})$ using Theorem~\ref{thm:tight-sparse}, 
using $O(N/B)$ I/Os,
and we pad this array with dummy
elements so the number of elements of $D$ in each interval
$[x_i,y_i]$ is exactly $\lceil 8N^{3/4}\rceil$.
We then
sort $D$ using the data-oblivious method of Lemma~\ref{lem:sub-sort}.
Next, for each subarray of $D$ of size $\lceil 8N^{3/4}\rceil$ we use
the selection algorithm of Theorem~\ref{thm:select2}
to select the $k_i$th smallest
element in this subarray, where $k_i$ is the value that will return
the $i$th quantile for $A$ (based on the counts we computed during
our scans of $A$).

\begin{lemma}
The number of elements of $A$ in $C$
is more than $N^{3/4}+N^{1/2}$ with probability
at most $e^{-N^{1/4}/3}$,
and the number of elements of $A$ in $C$ is less than 
$N^{3/4}-N^{1/2}$ with probability
at most $e^{-N^{1/4}/2}$.
\end{lemma}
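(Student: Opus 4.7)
The plan is to mirror the proof of Lemma~\ref{lem:sample} essentially verbatim, just with rescaled parameters. Let $X$ denote the number of elements of $A$ that get selected into $C$. Since each of the $N$ elements is included independently with probability $p = 1/N^{1/4}$, we have $E(X) = N^{3/4}$, and $X$ is a sum of independent Bernoulli variables, so standard Chernoff bounds apply.

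For the upper tail, I would write $N^{3/4} + N^{1/2} = (1 + \delta) N^{3/4}$ with $\delta = N^{-1/4}$, and invoke the multiplicative Chernoff bound $\Pr(X > (1+\delta)\mu) \le e^{-\delta^2 \mu / 3}$ (valid for $0 < \delta < 1$, exactly as used in Lemma~\ref{lem:sample}). Plugging in gives exponent $\delta^2 \mu / 3 = N^{-1/2} \cdot N^{3/4}/3 = N^{1/4}/3$, yielding the desired bound $e^{-N^{1/4}/3}$.

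For the lower tail, I would use the complementary form $\Pr(X < (1-\delta)\mu) \le e^{-\delta^2 \mu / 2}$ with the same $\delta = N^{-1/4}$ and $\mu = N^{3/4}$, giving exponent $N^{1/4}/2$ and hence the bound $e^{-N^{1/4}/2}$. Both calculations are completely parallel to the earlier lemma, with $n$ replaced by $N$, sampling probability $n^{-1/2}$ replaced by $N^{-1/4}$, and deviation $n^{3/8}$ replaced by $N^{1/2}$.

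There is essentially no obstacle here: the only thing to check is that $\delta = N^{-1/4} < 1$ so that the standard Chernoff forms apply, which is immediate for $N \ge 2$. No new techniques are needed beyond the textbook Chernoff inequalities already cited in the paper.
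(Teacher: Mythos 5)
Your proof is correct and follows essentially the same route as the paper: identify $X$ as a sum of independent Bernoulli indicators with mean $N^{3/4}$, rewrite $N^{3/4}\pm N^{1/2}$ as $(1\pm N^{-1/4})N^{3/4}$, and apply the standard multiplicative Chernoff upper- and lower-tail bounds (the paper cites Mitzenmacher--Upfal Theorems 4.4 and 4.5) to get the exponents $N^{1/4}/3$ and $N^{1/4}/2$. (Incidentally, you also correctly identify $E(X)=N^{3/4}$, where the paper's proof has a typo reading ``$E(X)=n^{1/2}$.'')
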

\begin{proof}
Let $X$ denote the number of elements of $A$ chosen to belong to $C$.
Noting that each element of $A$ is chosen independently with
probability $1/N^{1/4}$ to belong to $C$,
by a standard Chernoff bound (e.g., see~\cite{mu-pcrap-05}, Theorem 4.4),
\begin{eqnarray*}
\Pr(X>N^{3/4}+N^{1/2}) &=& \Pr(X>(1+N^{-1/4})N^{3/4}) \\
	&\le& e^{-N^{1/4}/3} ,
\end{eqnarray*}
since $E(X)=n^{1/2}$.
Also, by another standard Chernoff bound
(e.g., see~\cite{mu-pcrap-05}, Theorem 4.5),
\begin{eqnarray*}
\Pr(X<N^{3/4}-N^{1/2}) &=& \Pr(X<(1-N^{-1/4})N^{3/4}) \\
	&\le& e^{-N^{1/4}/2} .
\end{eqnarray*}
\end{proof}

We then compress these elements
into an array, $C$, of size
assumed to be at most
$N^{3/4}+N^{1/2}$ by Theorem~\ref{thm:tight-sparse}, using $O(N/B)$ I/Os,
which will fail with probability 
at most $1/N^{3c/4}$, for any given constant $c>0$.
Given the array $C$, we select items at ranks 
$x_i=({\hat n}i/(q+1)) - N^{1/2}$, rounded to integer ranks.
where ${\hat n}=N^{3/4}$.
We also select items at ranks,
$y_i=|C|-(N^{3/4}-N^{3/4}i/(q+1) - 2N^{1/2})$, rounded to integer ranks.
Let $[x_i,y_i]$ denote each such pair, 
with the added convention that
we take $x_1$ to be the smallest element in $A$ and
$y_q$ to be the largest element in $A$.

\begin{lemma}
There are more than
$8N^{3/4}$ elements of $A$ in
any interval $[x_i,y_i]$
with probability at most $e^{-N^{1/2}/9}+2e^{-N^{1/4}/3}$.
\end{lemma}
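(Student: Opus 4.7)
The plan is to follow the template of the proof of Lemma~\ref{lem:select}, adapted to the new sampling rate $p = 1/N^{1/4}$ and the new rank offsets used in defining $x_i$ and $y_i$. The overall structure has three ingredients: a concentration bound on $|C|$, a deterministic bound on the rank gap $r_y - r_x$ in $C$, and a Chernoff bound on the number of $A$-elements that can lie between two specified sample ranks.

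First I would invoke the previous lemma to obtain that $|C| \in [N^{3/4} - N^{1/2},\, N^{3/4} + N^{1/2}]$ except with probability at most $e^{-N^{1/4}/3} + e^{-N^{1/4}/2} \le 2e^{-N^{1/4}/3}$. Conditioning on this event, subtracting the two defining rank expressions gives
\[
r_y - r_x \;=\; |C| - N^{3/4} + 3N^{1/2} \;\le\; 4N^{1/2},
\]
so at most $n' := \lceil 4N^{1/2} \rceil + 1$ elements of $C$ lie in $[x_i,y_i]$, independent of the index $i$.

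Next I would model $Y$, the number of elements of $A$ in $[x_i,y_i]$, as bounded by a sum of $n'$ independent geometric random variables with parameter $p = 1/N^{1/4}$, by the same reasoning used in Lemma~\ref{lem:select}: walking through $A$ in sorted order from $x_i$ onward, the number of steps required to accumulate $n'$ sampled elements (the last of which is $y_i$) has exactly this distribution. Since each such variable has mean $N^{1/4}$, we have $E[Y] \le n' \cdot N^{1/4} = O(N^{3/4})$, and a Chernoff-style bound for sums of geometric variables (such as Lemma~\ref{lem:chernoff2}) yields
\[
\Pr(Y > 8N^{3/4}) \;\le\; e^{-N^{1/2}/9},
\]
entirely analogous to the estimate $e^{-4n^{3/8}/5}$ obtained in the proof of Lemma~\ref{lem:select}, but with the exponent stated in the weaker but cleaner form $N^{1/2}/9$.

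A union bound over the failure of the $|C|$ concentration event and the failure of the Chernoff bound on $Y$ then gives the claimed probability $e^{-N^{1/2}/9} + 2e^{-N^{1/4}/3}$. I expect the main obstacle to be simply verifying the Chernoff constant: one has to be careful that the quantity $(1+\delta)\mu = 8N^{3/4}$ with $\mu \le 5N^{3/4}$ corresponds to a constant $\delta$ bounded away from zero, so that the exponent in the geometric-sum Chernoff bound really reduces to a constant times $n'$ (rather than something that degrades with $n'$), and that the constant is at least $1/9$ after accounting for the slack between $\mu$ and its upper bound.
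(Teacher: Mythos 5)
Your proposal is correct and follows essentially the same route as the paper: condition on $|C|$ concentrating near $N^{3/4}$ (probability of failure at most $2e^{-N^{1/4}/3}$), observe that the rank gap $r_y - r_x = |C| - N^{3/4} + 3N^{1/2} \le 4N^{1/2}$, model the count of $A$-elements in the interval as a sum of roughly $4N^{1/2}$ geometric random variables with parameter $N^{-1/4}$, and apply the geometric-sum Chernoff bound (Lemma~\ref{lem:chernoff2}) to obtain $e^{-N^{1/2}/9}$; the constant indeed checks out since $\Pr(Y > (N^{1/4}+N^{1/4})\cdot 4N^{1/2}) \le e^{-4N^{1/2}/9} \le e^{-N^{1/2}/9}$. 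The only detail the paper makes explicit that you gloss over is that the first and last intervals (anchored at the array min and max) need a separate but analogous argument.
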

\begin{proof}
Let us assume that $N^{3/4}-N^{1/2}\le |C|\le N^{3/4}+N^{1/2}$,
which hold with probability at least $1-2e^{-N^{1/4}/3}$.
Thus, there are at most $4N^{1/2}$ elements in $C$ that
are in any $[x_i,y_i]$ pair, other than the first or last. Thus, in such
a general case, the number of elements, $X$, from $A$ in
this interval has expected value $E(X)\le 4N^{3/4}$.
In addition, since $X$ is the sum of geometric random variables with
parameter $1/N^{1/4}$,
\begin{eqnarray*}
\Pr(X>8N^{3/4}) &=& \Pr(X > (N^{1/4} + N^{1/4})4N^{1/2}) \\
		&\le& e^{-N^{1/2}/9}.
\end{eqnarray*}
The probability bounds for the first and last intervals are proved
by a similar argument.
\end{proof}

In addition, note that there are at most $(N/B)^{1/16}$ intervals,
$[x_i,y_i]$.
Also, we have the following.

\begin{lemma}
Interval $[x_k,y_k]$ contains the $k$th quantile with
probability at least $1-2e^{-N^{1/4}/3}$.
\end{lemma}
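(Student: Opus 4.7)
The plan is to decompose the failure event into two one-sided events—$x_k$ strictly exceeds the $k$th quantile of $A$, and $y_k$ strictly falls below it—and union-bound each by a Chernoff tail estimate. Let $q_k$ denote the element of $A$ at rank $\lceil Nk/(q+1)\rceil$. The boundary cases are free: by construction $x_1=\min A$ and $y_q=\max A$, so $x_1\le q_1$ and $y_q\ge q_q$ always hold, which is crucial since the ``natural'' rank formula in $C$ would otherwise fall outside $[1,|C|]$ at the extremes.

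For the first event, I would let $X$ count the sampled elements of $C$ that lie among the $\lceil Nk/(q+1)\rceil$ smallest elements of $A$. Since $x_k$ is taken as the element of $C$ at rank $r_x=\lceil N^{3/4}k/(q+1)-N^{1/2}\rceil$, the event $\{x_k>q_k\}$ is exactly $\{X<r_x\}$: if fewer than $r_x$ sampled elements fall at or below $q_k$, then the $r_x$-th order statistic of $C$ must lie strictly above $q_k$, and conversely. Now $X$ is binomial with mean $\mu\approx N^{3/4}k/(q+1)$, which exceeds $r_x$ by roughly $N^{1/2}$. Applying the Chernoff lower-tail bound $\Pr[X\le(1-\delta)\mu]\le e^{-\delta^2\mu/2}$ with $\delta=N^{1/2}/\mu$ gives $\delta^2\mu=N/\mu\ge N^{1/4}$ (since $\mu\le N^{3/4}$), so $\Pr[x_k>q_k]\le e^{-N^{1/4}/2}$.

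The event $\{y_k<q_k\}$ is handled symmetrically by modelling the count of sampled elements in the top $N-\lceil Nk/(q+1)\rceil$ of $A$. The definition of $y_k$ uses the larger offset $2N^{1/2}$ rather than $N^{1/2}$, a buffer needed to absorb the $\pm N^{1/2}$ fluctuation of $|C|$ about $N^{3/4}$; this only strengthens the Chernoff exponent. A union bound then yields total failure probability at most $2e^{-N^{1/4}/2}\le 2e^{-N^{1/4}/3}$, as claimed. The main care point—and the one piece of bookkeeping that could sink a careless calculation—is verifying that the offsets $N^{1/2}$ and $2N^{1/2}$ in the chosen ranks are simultaneously large enough to drive the Chernoff exponent above $N^{1/4}$ for every $1\le k\le q$ and small enough to keep the chosen ranks inside $[1,|C|]$; this is precisely where the worst cases $k=1$ and $k=q$ become tight and where the boundary substitutions $x_1=\min A$ and $y_q=\max A$ earn their keep.
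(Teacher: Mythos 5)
Your proof is correct but takes the dual route to the paper's. You model $X$ as the \emph{binomial} count of sampled elements falling at or below the $k$th quantile $q_k$ of $A$, so the failure event $\{x_k>q_k\}$ is a lower-tail event $\{X<r_x\}$ for a $\mathrm{Bin}(\lceil Nk/(q+1)\rceil,\,N^{-1/4})$ variable, and you invoke the standard Chernoff lower-tail bound. The paper instead models $X$ as the \emph{negative-binomial} count of $A$-elements below the $r_x$-th sample point, i.e., the $A$-rank of $x_k$ realized as a sum of geometric random variables with parameter $N^{-1/4}$, so the failure event becomes an upper-tail event $\{X>Nk/(q+1)\}$, which it bounds via the custom Chernoff estimate for sums of geometrics (Lemma~\ref{lem:chernoff2}). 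The two views are linked by the identity $\Pr[\mathrm{NegBin}(r_x,p)>M]=\Pr[\mathrm{Bin}(M,p)<r_x]$, read from opposite ends. Your side needs only the textbook binomial tail and yields a marginally stronger per-side exponent ($e^{-N^{1/4}/2}$, safely within the claimed $e^{-N^{1/4}/3}$); the paper's side stays consistent with its uniform use of geometric-sum bounds throughout the selection and quantile analyses. Your remarks on the boundary substitutions $x_1=\min A$, $y_q=\max A$ and on the asymmetric offsets $N^{1/2}$ versus $2N^{1/2}$ (the latter absorbing the $\pm N^{1/2}$ fluctuation of $|C|$) correctly capture bookkeeping that the paper's proof leaves implicit.
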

\begin{proof}
Let us consider the probability that the $k$th quantile is less than $x_k$.
In the random sample, $x_k$ has rank 
${\hat n}k/(q+1)-N^{1/2}=N^{3/4}k/(q+1)-N^{1/2}$.
Thus, the number, $X$, of elements from $A$ less than this number has 
expected value $E(X)=Nk/(q+1)-N^{3/4}$.
Since $X$ is the sum of geometric random variables with parameter $N^{-1/4}$,
 we can bound $\Pr\left(X > Nk/(q+1)\right)$ by
{\small
\begin{eqnarray*}
 &\relax& 
    \Pr\left( X> (N^{1/4} + \tau)\cdot
       \left(\frac{N^{3/4}k}{q+1}-N^{1/2}\right)\right) \\
     &\le& e^{-(\tau N^{-1/4})^2 (N^{3/4}k/(q+1)\, -\, N^{1/2})/3} \\
     &=& e^{-\tau^2 N^{-1/2} (N^{3/4}k/(q+1)\, -\, n^{1/2})/3},
\end{eqnarray*}
}
where $\tau = N^{3/4}/(N^{3/4}k/(q+1)\, -\, N^{1/2})$, which greater than $1$.
So
\begin{eqnarray*}
\Pr(X > Nk(q+1)) &\le& e^{-\tau N^{-1/2}N^{3/4}/3} \\
                 &<& e^{-N^{1/4}/3} .
\end{eqnarray*}
By a symmetric argument, the $k$th quantile is more than $y_k$ by this same
probability.
\end{proof}

Thus, each of the intervals, $[x_k,y_k]$, contains 
the $k$th quantile with
probability at least $1-2qe^{-N^{1/4}/3}$.
Therefore, we have the following.

\begin{theorem}
\label{thm:quantiles}
Given an array, $A$, of $N$ comparable items, we can select the 
$q\le (M/B)^{1/4}$ quantiles
in $A$ using $O(N/B)$ I/Os with a data-oblivious 
external-memory algorithm
that succeeds with probability at least $1-(N/B)^{-d}$, for any given
constant $d>0$, assuming only that $B\ge 1$ and $M\ge 2B$.
\end{theorem}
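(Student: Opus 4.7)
The plan is to verify the two-case algorithm described immediately before the theorem and to account for its I/Os and failure probability. In the easy case $M/B > (N/B)^{1/4}$ we have $\log_{M/B}(N/B) \le 4$, so the data-oblivious sorting algorithm of Lemma~\ref{lem:sub-sort} uses $O((N/B)\log^2_{M/B}(N/B)) = O(N/B)$ I/Os, after which a single oblivious scan emits the $q$ quantile-rank elements deterministically.

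For the main case $M/B \le (N/B)^{1/4}$, and hence $q \le (N/B)^{1/16}$, I would show that each stage uses $O(N/B)$ I/Os. The initial sampling scan of $A$ is linear; applying Lemma~\ref{lem:consolidate} and then the sparse tight compaction of Theorem~\ref{thm:tight-sparse} at the block level to the sampled elements uses $O(N/B)$ I/Os, and sorting the compacted array $C$ via Lemma~\ref{lem:sub-sort} is sub-linear since $|C|/B = O(N^{3/4}/B)$. The scan that harvests the $q$ pairs $(x_i,y_i)$ and the per-interval counts fits in Alice's private cache because $q \le (M/B)^{1/4}$. Marking the elements of $A$ that fall inside a harvested interval is a linear scan; compacting them into $D$ of size $O(qN^{3/4}) = O(N^{13/16})$ via Theorem~\ref{thm:tight-sparse} again uses $O(N/B)$ I/Os; and sorting $D$ by Lemma~\ref{lem:sub-sort} costs $O((N^{13/16}/B)\log^2_{M/B}(N/B)) = O(N/B)$. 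Finally, Theorem~\ref{thm:select2} is invoked once per padded subarray of size $\lceil 8 N^{3/4}\rceil$, for a total of $O(q \cdot N^{3/4}/B) = O(N^{13/16}/B)$ I/Os. Padding each subarray to exactly the same size with dummies is essential for obliviousness, because it makes the sequence of inputs to the selection routine depend only on $N$, $M$, $B$, and $q$.

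For correctness I would combine the three lemmas proved in the excerpt (bounding $|C|$, bounding the number of elements of $A$ in each $[x_i,y_i]$, and establishing the containment of the $k$th quantile in $[x_k,y_k]$) with a union bound over the $q \le (N/B)^{1/16}$ intervals and over the independent failure events of the sparse-compaction and selection subroutines; each of these is at most $(N/B)^{-c}$ for a constant $c$ we may choose, so taking these constants large enough yields overall success probability at least $1 - (N/B)^{-d}$ for any given $d$. Obliviousness of the composite algorithm follows because the scans of $A$ and $C$, the compactions, the sorts, and the per-interval selections are each individually oblivious, and the only data-dependent quantities held outside these black boxes (the pivots, the $q$ interval widths, and the bucket counts $k_i$) live entirely in Alice's private cache and so are not visible to Bob. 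The main obstacle is the I/O accounting for the sort of $D$, which is borderline linear and relies critically on $q \le (M/B)^{1/4}$ to keep $|D|$ at $O(N^{13/16})$.
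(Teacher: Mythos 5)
Your proposal reproduces the paper's two-case algorithm and its I/O and probability accounting essentially verbatim: the $M/B > (N/B)^{1/4}$ case via direct sorting, the sampling/compaction/harvesting/recompaction/per-interval-selection pipeline in the main case, and the union bound over the three auxiliary lemmas plus subroutine failures. The approach and the supporting steps are the same as the paper's, so there is nothing to add; the only minor quibble is that the sort of $D$ is not really borderline, since $N^{13/16}\log^2 N = o(N)$ with ample room to spare.
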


\section{Data-Oblivious Sorting}
\label{sec:sort}
Consider now the sorting problem, where we are given an array, $A$, of $N$
comparable items stored as key-value pairs and we want to output an array $C$
of size $N$ holding the items from $A$ in key order.
For inductive reasons, however, we allow both $A$ and $C$ to be arrays of
size $O(N)$ that have $N$ non-empty cells, with the requirement
that the items in non-empty
cells in the output array $C$ be given in non-decreasing order.

We begin by computing $q$ quantiles for the items in $A$ using
Theorem~\ref{thm:quantiles}, for $q=(M/B)^{1/4}$.
We then would like to distribute the items of $A$ distributed between all
these quantiles to $q+1$ subarrays of size $O(N/q)$ each.
Let us
think of each subgroup (defined by the quantiles)
in $A$ as a separate color, so that each item in $A$
is given a specific color, $1,2,\ldots,q+1$, with there being 
$\lceil N/(q+1)\rceil$ items
of each color.

\paragraph{Multi-way Data Consolidation.}
To prepare for this distribution, we do a $(q+1)$-way consolidation of $A$, 
so that the items in each block of size $B$ in 
the consolidated array $A'$ are all of the same color.
We perform this action as follows.
Read into Alice's memory the first $q+1$ blocks of $A$, 
and separate them into $q+1$ groups,
one for each color. 
While there is a group of items with the same color
of size at least $B$, output a block of these items to $A'$.
Once we have output $q'$ such blocks, all the remaining colors in Alice's memory
have fewer than $B$ members. So we then output $q+1-q'$ empty blocks to $A'$,
keeping the ``left over'' items in Alice's memory.
We then repeat this computation for the next $q+1$ blocks of $A$, and the
next, and so on.
When we complete the scan of $A$, all the blocks in $A'$ will be completely
full of items of the same color or they will be completely empty.
We finish this $(q+1)$-way consolidation, then, by outputting $(q+1)$ blocks,
each containing as many items of the same color as possible.
These last blocks are the only partially-full blocks in $A'$.
Thus, all the blocks of $A'$ are monochromatic and
all but these last blocks of $A'$ are full.

\paragraph{Shuffle-and-Deal Data Distribution.}
Our remaining goal, then, is to distribute the (monochromatic) blocks of $A'$
to $q+1$ separate arrays, $C_1$, $\ldots$, $C_{q+1}$, one for each color.
Unfortunately, doing this as a straightforward scan of $A'$
may encounter the colors in a non-uniform fashion.
To probabilistically avoid this ``hot spot'' behavior, we apply a
\emph{shuffle-and-deal} technique, where we perform a random permutation to
the $n'=O(N/B)$ blocks in $A'$ in a fashion somewhat reminiscent of
Valiant-Brebner routing~\cite{vb-uspc-81}.
The random permutation algorithm we use here is the well-known algorithm
(e.g., see Knuth~\cite{k-sa-98}),
where, for $i=1,2,\ldots,n'$, we swap block $i$ and a random block 
chosen uniformly from the range $[i,n']$.
This is the ``shuffle,''
and even though the adversary, Bob, can see us perform this shuffle, note that
the choices we make do not depend on data values. Given this
shuffled deck of blocks in $A'$, we then perform a series of scans to
``deal'' the blocks of $A'$ to the $q+1$ arrays.

We do this ``deal'' as follows.
We read in the next $(M/B)^{3/4}$ blocks of $A'$.
Note that, w.v.h.p., there should now be at most $O((M/B)^{1/2})$ blocks of each
color now in Alice's memory. So we write out $c(M/B)^{1/2}$ blocks to each
$C_i$ in Bob's memory, 
including as many full blocks as possible and padding with empty
blocks as needed (to keep accesses being data-oblivious), for a constant $c$
determined in the analysis. 
Then, we apply Theorem~\ref{thm:network2},
which implies a success with high probability,
to compact each $C_i$ to
have size $O(N/q)=O(N/(M/B)^{1/4})$ each, 
which is $O(N/(qB))$ blocks.
We then repeat the above computation for each subarray, $C_i$.

\paragraph{Data-Oblivious Failure Sweeping.}
We continue in this manner until we have formed $O(n^{1/2})$
subarrays of size $O(n^{1/2})$ each, where $n=N/B$. 
At this point, we then recursively call our sorting
algorithm to produce a padded sorting of each of the subarrays.
Of course, since we are recursively solving smaller problems, whose
success probability depends on their size, some of
these may fail to correctly produce a padded sorting of their inputs.
Let us assume that at most $O(n^{1/4})$ of these recursive sorts
fail, however, and apply Theorem~\ref{thm:network} 
to deterministically compact all of these subarrays into a single array,
$D$, of size $O(n^{3/4})$, in $O(n\log_m n)$ time, where $m=M/B$.
We then apply the deterministic data-oblivious sorting method of
Lemma~\ref{lem:sub-sort}
to $D$, and then we perform a
reversal of Theorem~\ref{thm:network}
to expand these sorted elements back to their
original subarrays. 
This provides a data-oblivious version of
the failure sweeping technique~\cite{gg-frpmp-97}
and gives us a padded sorting of $A$ w.v.h.p. 

Finally, after we have completed the algorithm for producing a padded
sorting of $A$, we perform a tight order-preserving compaction for
all of $A$ using Theorem~\ref{thm:network}. 
Given appropriate probabilistic guarantees, given below,
this completes the algorithm.

\begin{lemma}
Given $(M/B)^{3/4}$ blocks of $A'$, read in from consecutive blocks from a
random permutation, more than $c(M/B)^{1/2}$ of these blocks 
are of color, $\chi$, for a given color, $\chi$, with probability
less than $(N/B)^{-d}$, for $c>2de/\epsilon^2$,
where $\epsilon$ is the constant used in the
wide-block and tall-cache assumptions.
\end{lemma}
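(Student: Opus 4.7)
My plan is to reduce the statement to a standard Chernoff bound for the hypergeometric distribution and then use the wide-block and tall-cache assumptions to absorb the resulting exponent into $(N/B)^{-d}$.

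First I would inspect the output of the $(q{+}1)$-way consolidation step to argue that, for any fixed color $\chi$, the number of color-$\chi$ blocks in $A'$ is at most $\lceil n/(q{+}1)\rceil + 1$, where $n=\lceil N/B\rceil$ and $q{+}1$ is the number of quantile-defined color classes. The ``$+1$'' accounts for the single partially-full block per color emitted at the end of the consolidation scan; all other blocks of color $\chi$ are full. Hence the fraction of blocks in $A'$ of color $\chi$ is at most $1/(q{+}1)+o(1) = 1/(M/B)^{1/4}(1+o(1))$.

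Next, because the Knuth shuffle produces a uniformly random permutation of the $n$ blocks of $A'$, any contiguous window of $s=(M/B)^{3/4}$ blocks in the shuffled array is distributed as a uniform random size-$s$ subset of those $n$ blocks. Let $X$ denote the number of color-$\chi$ blocks in such a window. Then $X$ follows a hypergeometric distribution, and by the calculation above its mean satisfies $\mu \le (M/B)^{1/2}(1+o(1))$. Since sampling without replacement induces a negatively associated family of indicator variables, the standard multiplicative Chernoff bound applies in its full strength, giving
\[
\Pr(X \ge c\mu) \le \left(\frac{e}{c}\right)^{c\mu} = \exp\!\bigl(-c\mu \ln(c/e)\bigr).
\]

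Finally, I would plug in the lower bound on $\mu$ coming from the wide-block assumption $B\ge \log^{\epsilon}(N/B)$ and tall-cache assumption $M\ge B^{1+\epsilon}$, which together yield $M/B \ge B^\epsilon$ and hence a lower bound on $\mu=(M/B)^{1/2}$. Solving the inequality $c\mu \ln(c/e)\ge d\ln(N/B)$ for $c$ using this lower bound is intended to produce the threshold $c>2de/\epsilon^{2}$ stated in the lemma, with the factor $\epsilon^{2}$ in the denominator tracing directly back to the $B^{\epsilon}$ factor in $M/B$ combined with the square root in $\mu=(M/B)^{1/2}$.

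The main obstacle is the book-keeping in this last step: the constants must be arranged so that a \emph{constant} $c$ (rather than one growing with $n$) suffices to drive the Chernoff exponent past $d\ln(N/B)$, and absorbing the $(1+o(1))$ slack from the partial consolidation block into the bound must be done carefully. Everything else --- the hypergeometric modeling, the reduction to a Chernoff tail, and the use of negative association --- is routine once the consolidation structure of $A'$ is recognized.
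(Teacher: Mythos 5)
Your proposal is correct and follows the same high-level structure as the paper's proof: model the window's color count as a hypergeometric random variable, transfer Chernoff bounds from the with-replacement to the without-replacement setting, and then use the wide-block and tall-cache assumptions to make the exponent beat $d\log(N/B)$. The one technical difference is the tool used for the transfer step: the paper invokes Hoeffding's Theorem~4 (comparing expectations of convex functions under sampling without vs.\ with replacement, which immediately gives the MGF domination needed for the Chernoff bound), whereas you invoke negative association of the hypergeometric indicators. These are interchangeable, standard routes to the same conclusion, so the choice does not change the argument materially. Your preliminary step of explicitly bounding the count of color-$\chi$ blocks in $A'$ from the structure of the $(q{+}1)$-way consolidation is more careful than the paper (which simply asserts $E(Y)=(M/B)^{1/2}$), and it is a helpful addition. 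You correctly flag that the final constant-tracking step (showing $c>2de/\epsilon^2$ suffices) is the delicate point; the paper is equally terse there, so this does not represent a gap relative to the paper's own proof.
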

\begin{proof}
Let $X$ be the number of blocks among $(M/B)^{3/4}$ blocks
chosen independently without replacement from $A'$ that are of color
$\chi$ and let
$Y$ be the number of blocks among $(M/B)^{3/4}$ blocks
chosen independently with replacement from $A'$ that are of color
$\chi$.
By a theorem (4) of Hoeffding~\cite{h-pisbrv-63},
\[
\Pr(X > c(M/B)^{1/2}) \le \Pr(Y > c(M/B)^{1/2}).
\]
Thus, since $E(Y)=(M/B)^{1/2}$, 
then, by a Chernoff bound (e.g., Lemma~\ref{lem:chernoff} from the appendix),
\[
\Pr(Y > c(M/B)^{1/2}) \le 2^{-c(M/B)^{1/2}} \le (N/B)^d ,
\]
for $c>2de/\epsilon^2$.
\end{proof}

This bound applies to any $(M/B)^{3/4}$ blocks 
we read in from $A'$, and any specific color, $\chi$.
Thus, we have the following.

\begin{corollary}
For each set of $(M/B)^{3/4}$ blocks of $A'$, read in 
from consecutive blocks from the constructed
random permutation, more than $c(M/B)^{1/2}$ of these blocks 
are of any color, $\chi$, with probability
less than $(N/B)^{-d}$, for $c>2d'e/\epsilon^2$,
where $\epsilon$ is the constant used in the
wide-block and tall-cache assumptions and $d'\ge d+1$.
\end{corollary}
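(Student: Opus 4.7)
The plan is to derive the corollary from the preceding lemma by a union bound over colors and windows, absorbing the union-bound cost into a small slack in the exponent.

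First, I would count the number of (window, color) pairs that must be controlled simultaneously. The algorithm processes $A'$ in consecutive batches of $(M/B)^{3/4}$ blocks, giving at most $(N/B)/(M/B)^{3/4}$ windows. The color set induced by the $q$ quantiles has size $q+1 = O((M/B)^{1/4})$. Multiplying these two counts yields at most $O((N/B)/(M/B)^{1/2})$ pairs, which is in particular bounded by $N/B$.

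Second, I would apply the preceding lemma with parameter $d'$ in place of $d$, where $d' \ge d+1$. This gives, for any single window and any single color $\chi$, a failure probability of at most $(N/B)^{-d'}$, provided $c > 2d'e/\epsilon^2$, which matches the hypothesis of the corollary. A union bound over all at most $N/B$ pairs then yields a total failure probability of at most $(N/B)\cdot (N/B)^{-d'} \le (N/B)^{-(d'-1)} \le (N/B)^{-d}$, as required.

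The proof is essentially a direct uniformization of the previous lemma, so there is no substantive obstacle; the only real content is the bookkeeping for the union bound, which is precisely why the hypothesis requires $d' \ge d+1$ rather than $d' = d$. One small point worth verifying along the way is that the per-window sampling events covered by the previous lemma really do combine via a simple union bound, i.e., that we are not inadvertently using any independence across windows, but since the union bound requires no independence this is automatic.
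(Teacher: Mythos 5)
Your proof is correct and matches the paper's argument: apply the preceding lemma with exponent $d'$ per window-color pair, then take a union bound over all $O((N/B)/(M/B)^{1/2}) \le N/B$ pairs, giving $(N/B)^{1-d'} \le (N/B)^{-d}$ for $d' \ge d+1$. (The paper's proof writes the color count as $(M/B)^{1/16}$, which appears to be a typo for the $q+1 = O((M/B)^{1/4})$ colors set in the sorting algorithm; your count is the consistent one, and both yield the same final bound.)
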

\begin{proof} 
There are 
$(N/B)/(M/B)^{3/4}$ sets of 
$(M/B)^{3/4}$ blocks of $A'$ read in 
from consecutive blocks from the constructed
random permutation.
In addition, there are 
$(M/B)^{1/16}$ colors.
Thus, by the above lemma and the union bound,
the probability that any of these sets overflow a color $\chi$ is at
most
\[
\frac{N/B}{(M/B)^{3/4}}
\cdot
\frac{1}{(N/B)^{d'}}
\cdot
(M/B)^{1/16} 
\le 
(N/B)^{-d},
\]
for $d'\ge d+1$.
\end{proof}

So we write out $c(M/B)^{1/2}$ blocks to each
$C_i$, including as many full blocks as possible and padding with empty
blocks as needed. 
We then repeat the quantile computation and shuffle-and-deal 
computation on each of the $C_i$'s.

At the point when subproblem sizes become of size $O(n^{1/2})$, we
then switch to a recursive computation, which we claim inductively
succeeds with probability $1-1/n^{d/2}$, for any given constant $d\ge 2$,
for $n=N/B$.

\begin{lemma}
There are more than $n^{1/4}$ failing recursive subproblems with
probability at most $2^{-n^{1/4}}$.
\end{lemma}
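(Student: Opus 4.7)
The plan is to apply the inductive hypothesis to each of the $O(n^{1/2})$ recursive subproblems and then bound the number of failures by a Chernoff-type tail inequality.

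First, I would set up the counting. After the shuffle-and-deal distribution, we recurse on $k = \Theta(n^{1/2})$ subarrays, each of size at most $cn^{1/2}$ blocks for some constant $c$. By the inductive hypothesis, each such subproblem of size $s = \Theta(n^{1/2})$ fails with probability at most $1/s^{d/2} \le 1/n^{d/4}$. Let $X_i$ be the indicator that the $i$th recursive call fails, and let $X = \sum_{i=1}^{k} X_i$. Then $E[X] \le k/n^{d/4} = O(n^{1/2 - d/4})$, which is $o(1)$ for $d \ge 4$ (and we are free to choose $d$ sufficiently large compared to the base constant in the inductive claim).

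Second, I would address independence. The random bits used inside each recursive subproblem (its internal quantile sampling, shuffle-and-deal permutation, and invertible-Bloom-table hash choices) are drawn independently of those used by sibling subproblems once the parent's distribution has been fixed. Conditioned on the output of the parent's shuffle-and-deal step, the failure events of distinct recursive calls depend on disjoint pools of fresh randomness, so the $X_i$'s are mutually independent. (The only conditioning is on the size of each subarray being within the bound $O(n^{1/2})$, which holds w.v.h.p. by the previous corollary and can be absorbed into the overall failure-probability accounting.)

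Third, I would apply a standard Chernoff bound of the form $\Pr(X > t) \le (e \mu / t)^{t}$ for $t \ge e \mu$, with $t = n^{1/4}$ and $\mu = O(n^{1/2 - d/4})$. This yields
\[
\Pr(X > n^{1/4}) \;\le\; \left( \frac{e \cdot O(n^{1/2 - d/4})}{n^{1/4}} \right)^{n^{1/4}} \;=\; \bigl( O(n^{1/4 - d/4}) \bigr)^{n^{1/4}},
\]
which, for $d$ a sufficiently large constant and $n$ large enough, is bounded by $(1/2)^{n^{1/4}} = 2^{-n^{1/4}}$, as claimed.

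The main obstacle is the independence justification: one has to be careful that the conditioning on the parent's successful shuffle-and-deal does not destroy the claimed failure bounds for the children. I would resolve this by noting that the parent's success event is measurable with respect to the parent's random bits alone, while each child's success is measurable with respect to its own independent random bits (hash functions, sampling coins, and shuffle permutations drawn fresh inside the recursive call), so conditioning on the parent's event preserves mutual independence of the $X_i$ and leaves their individual failure probabilities unchanged up to the additive $(N/B)^{-d}$ event already accounted for in the overall analysis.
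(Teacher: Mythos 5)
Your proof is correct and takes essentially the same route as the paper: bound $E[X]$ using the inductive success-probability hypothesis and then apply a multiplicative Chernoff tail of the form $\Pr(X>t)\le(e\mu/t)^t$, which is algebraically the same as the paper's Lemma~\ref{lem:chernoff}. The one small discrepancy is in reading the inductive hypothesis: the paper states that each recursive subproblem of size roughly $n^{1/2}$ succeeds with probability $1-1/n^{d/2}$ (i.e., a size-$s$ instance fails with probability $1/s^d$), whereas you wrote $1/s^{d/2}=1/n^{d/4}$; this off-by-a-factor-of-two in the exponent is inconsequential, since either way $E[X]$ is polynomially small and the $(e\mu/t)^t$ bound with $t=n^{1/4}$ still delivers $2^{-n^{1/4}}$ for $d\ge 2$. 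Your explicit justification of why the $X_i$ are mutually independent (fresh per-subproblem randomness, conditioning only on the parent's distribution) is more careful than the paper's one-line assertion that ``each of the recursive calls fails independently,'' and is a worthwhile elaboration rather than a different approach.
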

\begin{proof}
Each of the recursive calls fails independently.
Thus, if we let $X$ denote the number of failing recursive calls,
then $E(X)\le n^{1/2}/n^{d/2}=n^{-(d-1)/2}$.
Thus, by a Chernoff bound (e.g., Lemma~\ref{lem:chernoff} from the appendix),
\begin{eqnarray*}
\Pr(X > n^{1/4}) &=& \Pr(X > n^{1/4\,+\,(d-1)/2}\cdot n^{(d-1)/2}) \\
	&\le& 2^{-n^{1/4}} .
\end{eqnarray*}
\end{proof}

Thus, the failure sweeping step in our sorting algorithm succeeds
with high probability, which completes the analysis and gives us
the following.

\begin{theorem}
Given an array, $A$, of size $N$, we can perform a data-oblivious 
sorting of $A$ with an algorithm that 
succeeds with probability $1-1/(N/B)^d$
and uses $O((N/B)\log_{M/B} (N/B))$ I/Os,
for any given
constant $d\ge 1$, assuming that $B\ge \log^\epsilon (N/B)$ and 
$M\ge B^{1+\epsilon}$, for some small constant $\epsilon>0$.
\end{theorem}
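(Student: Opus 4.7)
The plan is to glue together the distribution, consolidation, shuffle-and-deal, and failure-sweeping subroutines described in the preceding text into a recursive algorithm and then verify two things: the overall I/O count and the global success probability.

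First I would specify one \emph{distribution level}. Given a subproblem of $n$ blocks, invoke Theorem~\ref{thm:quantiles} with $q=(M/B)^{1/4}$ to obtain $q$ pivots in $O(n)$ I/Os, then run the multi-way consolidation (so every block of $A'$ is monochromatic) and the shuffle-and-deal. The Corollary immediately preceding the theorem guarantees that, with probability $1-1/n^{d+1}$, no color overflows its allotment of $c(M/B)^{1/2}$ blocks per window; so for each color we can pad blindly to a fixed capacity while preserving obliviousness. Applying Theorem~\ref{thm:network2} to each bucket $C_i$ then compacts it to $O(n/(M/B)^{1/4})$ blocks in $O(n)$ I/Os. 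Thus one distribution level costs $O(n)$ I/Os and fails with probability at most $1/n^{d+1}$.

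Next I would iterate distribution levels until each subproblem has size $O(n^{1/2})$, where $n = N/B$. Since each level shrinks subproblem sizes by a factor of $(M/B)^{1/4}$, this requires $O(\log_{M/B} n)$ levels and yields $O(n^{1/2})$ subproblems of size $O(n^{1/2})$, at total cost $O(n\log_{M/B} n)$. On each of these subproblems I would recurse. Letting $T(n)$ denote the total I/O cost, the recurrence
\[
T(n) \;=\; O(n\log_{M/B} n) \;+\; n^{1/2}\,T(n^{1/2})
\]
is dominated by its first term and admits the solution $T(n) = O(n\log_{M/B} n)$ by induction, since $n^{1/2}\cdot T(n^{1/2}) = n^{1/2}\cdot O(n^{1/2}\log_{M/B} n^{1/2}) = O(n\log_{M/B} n)$.

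The main obstacle is making the recursion work despite the fact that each recursive call can fail. Here is where failure sweeping is essential: by the Chernoff lemma proved just above the theorem, out of the $n^{1/2}$ recursive calls at most $n^{1/4}$ fail except with probability $2^{-n^{1/4}}$. I would handle the failing subarrays by using Theorem~\ref{thm:network} to order-preservingly compact them into a single array of size $O(n^{3/4})$ at cost $O(n\log_{M/B} n)$, deterministically sort this array using Lemma~\ref{lem:sub-sort} in $O(n^{3/4}\log^2_{M/B} n)$ I/Os (lower-order under the wide-block and tall-cache hypotheses), and then invert the compaction via Theorem~\ref{thm:network} to restore the sorted contents to the failing subarrays.

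Finally I would assemble the probability bound by a union bound: the failure probability is the sum of contributions from $O(\log_{M/B} n)$ distribution levels at each recursion node, Theorem~\ref{thm:network2} at each bucket, Theorem~\ref{thm:quantiles} at each call, and the recursive subcalls themselves, each bounded by $1/n^{d+1}$ for a suitable $d$. Since the total number of events is polynomial in $n$, the aggregate failure probability is at most $1/n^d$. A last application of Theorem~\ref{thm:network} to the padded output yields a tight sorted array in $O(n\log_{M/B} n)$ additional I/Os, completing the proof.
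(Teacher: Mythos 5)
Your proposal follows essentially the same route as the paper: compute $(M/B)^{1/4}$ quantiles, perform multi-way consolidation and shuffle-and-deal distribution, iterate distribution levels down to $O(n^{1/2})$ subproblems of size $O(n^{1/2})$, recurse, repair the $O(n^{1/4})$ failed recursive calls via failure sweeping (compact with Theorem~\ref{thm:network}, sort with Lemma~\ref{lem:sub-sort}, expand back), and finish with a tight compaction; the correctness and I/O accounting you give match the paper's. One minor remark: your inductive solution of $T(n)=O(n\log_{M/B} n)+n^{1/2}T(n^{1/2})$ is stated a bit loosely (plugging $T(n^{1/2})=O(n^{1/2}\log_{M/B} n^{1/2})$ gives $O(n\log_{M/B} n)+\tfrac12 O(n\log_{M/B} n)$, so you need the geometric decay of the logarithmic factor across recursion depths to keep the constant from growing), but the underlying geometric-sum argument is sound and the conclusion is correct.
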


\ifFull
\section{Conclusion}
We have given several data-oblivious algorithms for fundamental
combinatorial problems involving outsourced data. 
It would be interesting to know if a linear-I/O
data-oblivious algorithm is possible for data compression without the
wide-block and tall-cache assumptions
(in Appendix~\ref{app:logstar}, we show how to relax
this wide-block assumption, albeit at the cost of an extra log-star factor
in the I/O performance).
It would also be interesting to investigate
data-oblivious algorithms for graph and geometric problems in this
same model, as such computations are well-motivated for large
outsourced data sets.
\fi

\subsection*{Acknowledgments}
We would like to thank Pawel Pszona for some helpful comments
regarding an earlier version of this paper.
This research was supported in part by the National Science
Foundation under grants 0724806, 0713046, 0847968, and 0953071.

{\raggedright
\bibliographystyle{abbrv}
\bibliography{cuckoo,extra,goodrich,geom,cuckoo2}
}

\clearpage
\begin{appendix}
\section{Some Chernoff Bounds}
Several of our proofs make use of Chernoff bounds
\ifFull
(e.g., see~\cite{as-pm-92,c-maeth-52,mu-pcrap-05,mr-ra-95,m-cgitr-93}
\else
(e.g., see~\cite{mu-pcrap-05}
\fi
for other examples), which, for the sake of completeness,
we review in this subsection.
We begin with the following
Chernoff bound, which is 
a simplification of a well-known bound.

\begin{lemma}
\label{lem:chernoff}
Let $X=X_1+X_2+\cdots+X_n$ be the sum of 
independent $0$-$1$ random variables, such that $X_i=1$ with probability
$p_i$, and let $\mu\ge E(X)=\sum_{i=1}^n p_i$.
Then, for $\gamma > 2e$,
\[
\Pr\left( X > \gamma\mu\right) < 2^{-\gamma\mu\log (\gamma/e)} .
\]
\end{lemma}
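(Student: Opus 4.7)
The plan is to apply the standard Chernoff method via the moment generating function. I would begin with Markov's inequality applied to $e^{tX}$ for a parameter $t>0$ to be optimized:
\[
\Pr(X > \gamma\mu) = \Pr(e^{tX} > e^{t\gamma\mu}) \le \frac{E[e^{tX}]}{e^{t\gamma\mu}}.
\]
Since the $X_i$ are independent, $E[e^{tX}] = \prod_i E[e^{tX_i}] = \prod_i (1 + p_i(e^t-1))$. Using $1+x \le e^x$, this is bounded by $e^{(e^t-1)\sum_i p_i}$. Because $t>0$ ensures $e^t - 1 > 0$ and we are given $\mu \ge \sum_i p_i$, I can replace $\sum_i p_i$ by $\mu$, yielding $E[e^{tX}] \le e^{\mu(e^t-1)}$.

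Next I would optimize by choosing $t = \ln\gamma$, which is positive since $\gamma > 2e > 1$. Substituting gives
\[
\Pr(X > \gamma\mu) \le e^{\mu(\gamma - 1) - \gamma\mu\ln\gamma} = e^{-\mu(\gamma\ln\gamma - \gamma + 1)} < e^{-\gamma\mu\ln(\gamma/e)},
\]
where the final strict inequality comes from dropping the $+1$ term in the exponent and rewriting $\gamma\ln\gamma - \gamma = \gamma\ln(\gamma/e)$.

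Finally, converting natural log to base~$2$ via $e^{-x} = 2^{-x/\ln 2}$ together with $\ln(\gamma/e)/\ln 2 = \log_2(\gamma/e)$, I obtain
\[
\Pr(X > \gamma\mu) < 2^{-\gamma\mu\log_2(\gamma/e)},
\]
which is exactly the claimed bound (interpreting $\log$ as $\log_2$, as is standard in this paper). There is no real obstacle; the argument is entirely routine manipulation of the exponential moment. The hypothesis $\gamma > 2e$ is used only to ensure that $\log_2(\gamma/e) > 1$, so the exponent on the right-hand side is at least $\gamma\mu$, making the bound a genuine tail estimate rather than a vacuous inequality.
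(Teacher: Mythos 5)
Your proof is correct and follows essentially the same route as the paper: the paper simply cites the standard Chernoff bound $\Pr(X>\gamma\mu) < \left(e^{\gamma-1}/\gamma^{\gamma}\right)^{\mu}$ and then rewrites it as $(e/\gamma)^{\gamma\mu} = 2^{-\gamma\mu\log(\gamma/e)}$, while you derive that same bound from scratch via the moment-generating-function argument with $t=\ln\gamma$ before performing the identical simplification (dropping the harmless $e^{-\mu}$ factor and converting to base $2$). Your closing remark about the role of $\gamma>2e$ also matches the paper's own commentary that $\log(\gamma/e)>1$ in that range.
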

\begin{proof}
By a standard Chernoff bound 
\ifFull
(e.g., see~\cite{as-pm-92,c-maeth-52,mu-pcrap-05,mr-ra-95,m-cgitr-93})
\else
(e.g., see~\cite{mu-pcrap-05})
\fi
and the fact that $\gamma > 2e$,
\begin{eqnarray*}
\Pr\left( X > \gamma\mu\right) &<& 
  \left( \frac{e^{\gamma-1}}{\gamma^\gamma} \right)^\mu \\
  &\le& \left( \frac{e}{\gamma} \right)^{\gamma\mu} \\
  &=& 2^{-\gamma\mu\log (\gamma/e)} .
\end{eqnarray*}
\end{proof}

There are other 
simplified Chernoff
bounds similar to that of Lemma~\ref{lem:chernoff}
\ifFull
(e.g., see~\cite{as-pm-92,mu-pcrap-05,mr-ra-95,m-cgitr-93}),
\else
(e.g., see~\cite{mu-pcrap-05,m-cgitr-93}),
\fi
but they typically omit the $\log (\gamma/e)$ term
(where the log is base-2, of course). 
We include it here, since it is
useful for large $\gamma$, which will be the case for some of our uses.
Nevertheless, we sometimes
leave off the $\log (\gamma/e)$ term, as well, in applying
Lemma~\ref{lem:chernoff}, if that
aids simplicity, since $\log (\gamma/e) > 1$ for $\gamma > 2e$.

In addition, we also need a Chernoff bound for the sum, $X$, of $n$
independent geometric random variables with parameter $p$, that is,
for $X$ being a negative binomial random variable with parameters
$n$ and $p$.
Recall that a geometric random variable with parameter $p$ is 
a discrete random variable that is equal
to $j$ with probability $q^{j-1}p$, where $q=1-p$.
Thus, $E(X) = \alpha n$, where $\alpha=1/p$.

\begin{lemma}
\label{lem:chernoff2}
Let $X=X_1+X_2+\cdots+X_n$ 
be the sum of $n$ independent geometric random variables with
parameter $p$.
Then we have the following:
\begin{itemize}
\item If $0<t < \alpha/2$, then $ \Pr(X > (\alpha + t)n) \le  e^{-(tp)^2n/3}$. 
\item If $t\ge \alpha/2$, then $ \Pr(X > (\alpha + t)n) \le  e^{-tpn/9} .  $
\item If $t\ge \alpha$, then $ \Pr(X > (\alpha + t)n) \le  e^{-tpn/5} .  $
\item If $t\ge 2\alpha$, then $ \Pr(X > (\alpha + t)n) \le  e^{-tpn/3} .  $
\item If $t\ge 3\alpha$, then $ \Pr(X > (\alpha + t)n) \le  e^{-tpn/2} .  $
\end{itemize}
\end{lemma}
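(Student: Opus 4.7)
The plan is to prove all five bounds through a single reduction. Since $X$ is the total number of independent Bernoulli$(p)$ trials needed to accumulate $n$ successes, the event $\{X > (\alpha+t)n\}$ is identical to the event that fewer than $n$ successes occur in $m := (\alpha+t)n$ trials. Thus, setting $Y \sim \text{Binomial}(m, p)$ with mean $\mu = mp = (1 + t/\alpha)n$, one has
\[
\Pr(X > (\alpha+t)n) \;=\; \Pr(Y < n) \;=\; \Pr\bigl(Y < (1-\delta)\mu\bigr),
\]
where $\delta = t/(\alpha+t) \in (0,1)$, using $\alpha p = 1$. This converts a negative-binomial upper tail into a binomial lower tail, for which standard Chernoff tools apply.

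Applying the standard quadratic Chernoff bound $\Pr(Y < (1-\delta)\mu) \le \exp(-\delta^2\mu/2)$ and simplifying gives the uniform estimate
\[
\delta^2 \mu/2 \;=\; \frac{t^2 n p}{2(\alpha+t)}.
\]
The first four bullets then follow from this single expression by elementary case analysis on $t$ versus $\alpha$: if $t < \alpha/2$ then $\alpha+t < 3\alpha/2$ yields $\delta^2\mu/2 > (tp)^2 n/3$; if $t \ge \alpha/2$ then $\alpha + t \le 3t$ yields $\delta^2\mu/2 \ge tpn/6 \ge tpn/9$; if $t \ge \alpha$ then $\alpha+t \le 2t$ yields $\delta^2\mu/2 \ge tpn/4 \ge tpn/5$; and if $t \ge 2\alpha$ then $\alpha+t \le 3t/2$ yields $\delta^2\mu/2 \ge tpn/3$.

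The main obstacle is the last case, $t \ge 3\alpha$, where the quadratic form only gives $e^{-3tpn/8}$, falling short of the claimed $e^{-tpn/2}$. Here I would switch to the sharper multiplicative (KL-divergence) form of the Chernoff bound,
\[
\Pr\bigl(Y \le (1-\delta)\mu\bigr) \;\le\; \exp\bigl(-\mu \cdot h(\delta)\bigr), \qquad h(\delta) = \delta + (1-\delta)\ln(1-\delta).
\]
Using the convenient identity $(1-\delta)\mu = n$, together with $\delta\mu = tnp$, the exponent collapses to
\[
\mu h(\delta) \;=\; tnp \,+\, n\ln\!\bigl(\alpha/(\alpha+t)\bigr).
\]
Therefore $\mu h(\delta) \ge tpn/2$ is equivalent to the one-variable inequality $x/2 \ge \ln(1+x)$ at $x = tp = t/\alpha$. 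This inequality holds at $x = 3$ (since $\ln 4 < 3/2$), and the derivative $1/2 - 1/(1+x)$ is non-negative for $x \ge 1$, so it persists for all $x \ge 3$, i.e.\ for all $t \ge 3\alpha$, completing the fifth bullet.
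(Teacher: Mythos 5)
Your proof is correct, and it takes a genuinely different route from the paper. The paper applies the Chernoff/MGF technique directly to the sum of geometric variables, optimizing $e^{\lambda}=1+\beta t/(\alpha+t)$ by hand to reach the master bound $\Pr(X>(\alpha+t)n)\le e^{-tpn}(1+tp)^n$, and then settles the five cases with a list of numerical facts comparing $1+x$ to $e^{x/(1+c)}$ on various ranges. You instead use the negative-binomial/binomial duality $\{X>(\alpha+t)n\}=\{\mathrm{Bin}((\alpha+t)n,p)<n\}$ and quote standard binomial lower-tail bounds: the quadratic form $e^{-\delta^{2}\mu/2}$ dispatches the first four bullets in one stroke (your computation $\delta^{2}\mu/2=t^{2}np/(2(\alpha+t))$ and the ensuing case analysis check out, and for the second and third bullets you in fact obtain the stronger constants $6$ and $4$), while the relative-entropy form handles the fifth, where the quadratic form indeed tops out below $tpn/2$. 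Note that your KL exponent $tpn-n\ln(1+tp)$ is exactly the exponent of the paper's master bound, so the two arguments converge there, and your final one-variable inequality $x/2\ge\ln(1+x)$ for $x\ge3$ is the same elementary fact the paper invokes as $1+x<e^{x/2}$ for $x\ge3$. What your route buys is that the delicate MGF optimization---precisely where the flaw in Mulmuley's argument that the paper fixes resided---is replaced by citations to textbook bounds. The one technicality you gloss over is that $(\alpha+t)n$ need not be an integer number of trials; this is harmless: since $X$ is integer-valued, $\{X>(\alpha+t)n\}\subseteq\{\mathrm{Bin}(\lceil(\alpha+t)n\rceil,p)\le n\}$, and since both exponents $(\mu-n)^{2}/(2\mu)$ and $\mu-n-n\ln(\mu/n)$ are nondecreasing in $\mu$ for $\mu\ge n$, rounding the number of trials up only strengthens the bounds, so one added sentence makes the argument airtight.
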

\begin{proof}
We follow the approach of Mulmuley~\cite{m-cgitr-93}, who 
uses the Chernoff technique 
\ifFull
(e.g., see~\cite{c-maeth-52,mu-pcrap-05,mr-ra-95,m-cgitr-93})
\else
(e.g., see~\cite{mu-pcrap-05})
\fi
to prove a similar result for the special case when $p=1/2$ 
and $t\ge 6$ (albeit with a slight flaw, which we fix).
For $0< \lambda < \ln (1/(1-p))$, 
\begin{eqnarray*}
E\left(e^{\lambda X_i}\right) &=&
  \sum_{j=1}^{\infty} e^{\lambda j} \Pr(X_i=j) \\
  &=& \sum_{j=1}^{\infty} e^{\lambda j} q^{j-1} p \\
  &=& p e^\lambda \sum_{j=0}^{\infty} (e^\lambda q)^{j} \\
  &=& \frac{p e^\lambda}{1-e^\lambda q}.
\end{eqnarray*}
Applying the Chernoff technique, then,
\begin{eqnarray*}
\Pr(X > (\alpha + t)n) &\le& e^{-\lambda (\alpha+t)n} 
			 \left(\frac{p e^\lambda}{1-e^\lambda q} \right)^n \\
   &=& p^n \left(\frac{e^{-\lambda(\alpha + t -1)}}{1-e^\lambda q} \right)^n.\\
\end{eqnarray*}
Let $\beta=p/(1-p)$ and observe that we can satisfy the condition that
$0< \lambda < \ln (1/(1-p))$ by setting 
\[
e^\lambda = 1 + \frac{\beta t}{\alpha + t} .
\]
By substitution and some calculation, note that
\[
e^{-\lambda} = 1 - \frac{\beta t}{\alpha + t + \beta t} 
\]
and
\[
1-e^\lambda q = \frac{p}{1+tp} .
\]
Thus,
we can bound $\Pr(X > (\alpha + t)n)$ by  
\begin{eqnarray*}
&\relax& p^n 
\left( 1 - \frac{\beta t}{\alpha + t + \beta t} \right)^{(\alpha+t-1)n}
\left( \frac{1+tp}{p} \right)^n \\
&=& \left( 1 - \frac{\beta t}{\alpha + t + \beta t} \right)^{(\alpha+t-1)n}
\left( 1+tp \right)^n.
\end{eqnarray*}
Moreover, since $1-x\le e^{-x}$, for all $x$, 
\[
\left( 1 - \frac{\beta t}{\alpha + t + \beta t} \right)^{(\alpha+t-1)}
\le e^{-\frac{\beta t (\alpha + t -1)}{\alpha + t + \beta t}}
= e^{-tp} .
\]
Therefore, 
\[
\Pr(X > (\alpha + t)n) \le
e^{-tpn}
\left( 1+tp \right)^n.
\]
Unfortunately, if we use the well-known inequality, 
$1+x\le e^x$, with $x=tp$, 
to bound the ``$1+tp$'' term in the above equation,
we get a useless result.
So, instead, we use better approximations:
\begin{itemize}
\item
If $0< x < 1$, then we can use a truncated Maclaurin series
to bound
\[
\ln (1 + x) \le x - \frac{x^2}{2} + \frac{x^3}{3} .
\]
Thus,
\[
1 + x \le e^{x - \frac{x^2}{2} + \frac{x^3}{3}},
\]
which implies that, for $0<t<\alpha/2$,
\begin{eqnarray*}
\Pr(X > (\alpha + t)n) &\le&
      \left(\frac{e^{(tp)^3/3}}{e^{(tp)^2/2}}\right)^n \\
      &\le& e^{-(tp)^2n/3} .
\end{eqnarray*}
\item
The remaining bounds follow from the following facts, which are
easily verified:
\begin{enumerate}
\item If $x\ge 1/2$, then $1+x< e^{x/(1+1/8)}$.
\item If $x\ge 1$, then $1+x< e^{x/(1+1/4)}$.
\item If $x\ge 2$, then $1+x< e^{x/(1+1/2)}$.
\item If $x\ge 3$, then $1+x< e^{x/2}$.
\end{enumerate}
\end{itemize}
\end{proof}

Incidentally, the bound for $t\ge 3\alpha$ fixes a 
slight flaw in a Chernoff bound proof by Mulmuley~\cite{m-cgitr-93}.

\ifFull
\section{A Slightly Super-Linear Algorithm for Data-Oblivious Loose Compaction}
\label{app:logstar}
In this appendix, we provide a proof for
\textbf{Theorem~\ref{thm:logstar}}, which states that,
given an array, $A$, of size $N$, such that at most $R<N/4$ 
of $A$'s elements are marked as distinguished, we can compress 
the distinguished elements in $A$
into an array $B$ of size $4.25R$
using a data-oblivious algorithm that uses
$O((N/B)\log^* (N/B))$ I/Os and succeeds with probability at
least $1-1/(N/B)^d$, for any given constant $d\ge 1$,
assuming only that $B\ge 1$ and $M\ge 2B$.
This method does not necessarily 
preserve the order of the items in $A$.

\medskip
\begin{proof}
We begin by performing the consolidation operation of 
Lemma~\ref{lem:consolidate},
so that each block in $A$, save one, is completely full or 
completely empty. We
then perform the following RAM algorithm at the granularity of
blocks, so that each memory read or write is actually an I/O for an
entire block.
That is, we describe a RAM algorithm that runs in
$O(n\log^* n)$ time, with high probability, where $n=N/B$ and $r=R/B$,
and we implement
it for the external-memory model so as to perform each read or write
of a word as a read or write of a block of size $B$.

So, suppose we are given an array $A$ of size $n$, such that at most
$r<n/4$ of the elements in $A$ are marked as ``distinguished''
and
we want to map the distinguished elements from $A$ into an array $D$,
of size $4.25r$ using an algorithm that is data-oblivious.

Our algorithm begins by testing if 
$n$ is smaller than a constant, $n_0$, determined in the analysis,
in which case we compact
$A$ using the deterministic data-oblivious sorting algorithm
of Lemma~\ref{lem:sub-sort}.
In addition, we also test if
$r<n/\log^2 n$, in which case we
use Theorem~\ref{thm:tight-sparse} to compact $A$ into $D$.
(Note that in either of these base cases, we can compact the 
at most $r$ distinguished
elements of $A$ into an array size exactly $r$.)
Thus, let us assume that $r\ge n/\log^2 n$ and $n\ge n_0$.

Our algorithm for this general case
is loosely based on the parallel linear
approximate compaction algorithm of Matias and Vishkin~\cite{mv-chpnc-91}
and proceeds with a number of phases.
In this general case, we restrict $D$ to its first $4r$ cells, leaving
its last $0.25r$ cells for later use.
At the beginning of each phase, $i$, we assume inductively that there are
at most 
\[
\frac{r}{t_i^4}
\]
distinguished elements left in $A$, where $t_1 = 2^2$ and
\[
t_{i+1} = 2^{t_i},
\]
for $i\ge 1$.
That is, the $t_i$'s form the tower-of-twos sequence; hence, there are
at most $O(\log^* n)$ phases.

We begin our data-oblivious algorithm for compacting
the distinguished elements in $A$ into $D$ by performing $c_0$
$A$-to-$D$ thinning passes, for a constant, $c_0$, determined by the
following lemma.

\begin{lemma}
\label{lem:init}
Performing an initial $c_0$ $A$-to-$D$ thinning passes 
in the general-case algorithm
results in there being at most $r/t_1^4$
distinguished elements in $A$, with probability at 
least $1-n^{-d}$, for any given constant $d\ge1$, provided $c_0\ge 2^3$.
\end{lemma}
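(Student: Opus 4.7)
The plan is to bound the number $X$ of still-unplaced distinguished elements after $c_0$ thinning passes by a Chernoff-style argument. The key observation is an occupancy bound: since $A$ has at most $r$ distinguished elements and each thinning pass writes only distinguished elements into $D$ (viewed in its first $4r$ cells), at every moment during the entire process at most $r$ of the $4r$ cells of $D$ are occupied. Therefore, whenever a distinguished element $a_i$ of $A$ selects its random index $j \in [1,4r]$ in some pass, the conditional probability (given arbitrary history) that $D[j]$ is already occupied is at most $r/(4r) = 1/4$. Iterating over $c_0$ passes, the conditional probability that $a_i$ fails to be placed in every one of them is at most $(1/4)^{c_0}$.

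Next I would quantify the expected count. Summing over the at most $r$ distinguished elements, $E[X] \le \mu := r/4^{c_0}$, and with $c_0 \ge 2^3 = 8$ this gives $\mu \le r/4^8 = r/65536$. I want to bound $\Pr[X > r/t_1^4] = \Pr[X > r/256]$. Setting $\gamma = r/256 \cdot 1/\mu = 4^{c_0}/4^4 = 4^{c_0-4} \ge 4^4 = 256 > 2e$, Lemma~\ref{lem:chernoff} yields
\[
\Pr[X > r/256] < 2^{-(r/256)\log(256/e)} \le 2^{-r/50}.
\]
Using the general-case hypothesis $r \ge n/\log^2 n$ (and $n \ge n_0$ for a suitable constant $n_0$), this bound is at most $2^{-n/(50\log^2 n)} \le n^{-d}$ for any given $d \ge 1$, which is the required failure probability.

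The main obstacle is that the per-element failure indicators $Y_i$ are not truly independent: whether $a_i$ fails depends on which cells of $D$ got occupied, which in turn depends on the random choices and successes of the other distinguished elements. To apply Lemma~\ref{lem:chernoff} cleanly, I would establish stochastic domination: $X$ is dominated by a $\mathrm{Bin}(r, (1/4)^{c_0})$ random variable. This is proved by revealing the random choices in a fixed sequential order (say, element by element and pass by pass) and coupling: at each revelation step, the conditional probability of a ``fail in this pass'' outcome for $a_i$ is at most $1/4$ no matter what history has been revealed, because the occupancy bound above always holds. Standard coupling then constructs independent Bernoulli$(1/4)$ variables dominating the step outcomes, and the product over $c_0$ passes yields the desired Binomial domination. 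Once $X \preceq \mathrm{Bin}(r,(1/4)^{c_0})$ is established, the Chernoff calculation above goes through verbatim and completes the proof.
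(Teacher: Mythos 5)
Your proof is correct and follows the same overall strategy as the paper's: bound each pass's per-element failure probability by $1/4$ via the occupancy argument (at most $r$ of the $4r$ cells of $D$ are ever occupied, since only the at-most-$r$ distinguished elements are ever written and each at most once), compound over $c_0\ge 8$ passes to get expected count $\mu \le r/4^{c_0}$, apply Lemma~\ref{lem:chernoff} with $\gamma = 4^{c_0-4}\ge 256 > 2e$ and $\gamma\mu = r/256$, and finish with the general-case hypothesis $r\ge n/\log^2 n$ together with $n\ge n_0$ to push the failure probability below $n^{-d}$. The paper's proof uses exactly these quantities.

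The one place where you do something the paper does not is the handling of dependence. The paper states, without argument, that the number of failing distinguished elements ``can be bounded by a sum of $r$ independent $0$-$1$ random variables'' each with parameter $4^{-c_0}$; the per-element failure indicators are in fact correlated through their shared competition for cells of $D$, so this assertion conceals a nontrivial step. Your stochastic-domination argument --- reveal the random index choices in scan order, observe that the conditional per-attempt collision probability is at most $1/4$ no matter the revealed history, couple each attempt with an independent Bernoulli$(1/4)$ variable that dominates it, and conclude $X$ is stochastically dominated by $\mathrm{Bin}(r,(1/4)^{c_0})$ --- supplies precisely the missing justification, after which Lemma~\ref{lem:chernoff} applies cleanly to the dominating binomial. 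So your write-up is a tightened version of the paper's argument rather than a different one; the numerical bookkeeping ($\mu$, $\gamma$, the final threshold on $n_0$) is the same in both.
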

\begin{proof}
If we perform $c_0\ge 2^3$ initial $A$-to-$D$ thinning passes, then
the expected number of distinguished elements, 
$\mu$, still in $A$ will be bounded by
\[
\mu = \frac{r}{2^{16}},
\]
since the probability of a 
collision in each pass is at most $1/4$.
Moreover,
we can bound the number of failing distinguished elements by a sum, $X$, of
$r$ independent 0-1 random variables, each of which is $1$ with
probability $1/2^{16}$. 
Since $t_1=2^2=4$, 
\begin{eqnarray*}
\Pr\left( X > \frac{r}{t_1^{4}}\right) &=&
\Pr\left( X > \frac{r}{2^{8}}\right) \\
&=& \Pr\left( X > 2^8\cdot \frac{r}{2^{16}}\right) \\
&=& \Pr\left( X > 2^8\mu \right),
\end{eqnarray*}
which we can bound using the Chernoff bound of Lemma~\ref{lem:chernoff}
as
\[
\Pr\left( X > 2^8\mu \right) < 2^{-256\mu}.
\]
In addition, note that since $r\ge n/\log^2 n$,
\begin{eqnarray*}
256\mu &=& \frac{256r}{2^{16}} \\
     &\ge& \frac{256n}{2^{16}\log^2 n} \\
     &\ge& d\log n ,
\end{eqnarray*}
provided we set $n_0$ so that
the last inequality is true whenever $n\ge n_0$.
Thus, the probability of failure is at most $n^{-d}$.
\end{proof}
(for Lemma~\ref{lem:init})

So, let us assume inductively
that at the beginning of phase $i$ there are at most $r/t_i^4$
distinguished elements remaining in $A$.
If, at the beginning of phase $i$, 
\[
\frac{r}{t_i^4} \le \frac{n}{\log^2 n} ,
\]
then we apply Theorem~\ref{thm:tight-sparse} 
to compress the remaining distinguished
elements of $A$ into the last $0.25r\le r/t_i^4$ cells of $D$, which completes
the algorithm.
That is,
this results in a compaction of $A$ into an array $D$ of size $4.25r$.

If, on the other hand, $r/t_i^4 > n/\log^2 n$, then we continue with phase
$i$, which we divide into two steps, a thinning-out step and a
region-compaction step.

\paragraph{The Thinning-Out Step.}
In the thinning-out step
we create an auxiliary array, $C$, of size
$r/t_i$, and we perform two $A$-to-$C$ thinning passes,
which will
fail to map any remaining distinguished item in $A$ into $C$
with probability at most $(1/t_i^3)^2=1/t_i^6$, by our induction hypothesis.
Given this mapping, we then perform $t_i$ $C$-to-$D$ thinning passes,
so that the probability that any distinguished item in
$C$ fails to get mapped to $D$ is at most $1/2^{2t_i}$.
Since there may be some items originally in $A$ that are now stuck in
$C$, we grow $A$ by concatenating the current copy of $A$ with $C$.
This grows the size of $A$ to be $n+\sum_{j=1}^i r/t_j$, which is 
less than $n+r/2$ even
accounting for all the times we grew $A$ in previous phases.
In addition,
note that the total time to perform all these passes is $O(n+r)$.

At this point in phase $i$, for each distinguished
element that was in $A$ at the beginning of the phase,
we can bound the event that this element failed to get mapped 
into $D$ with an independent
indicator random variable that is $1$ with probability $1/t_i^4$,
since $t_i^4<2^{2t_i}$.

\paragraph{The Region-Compaction Step.}
In this step of phase $i$, we divide $A$ into contiguous regions, $A_1$, $A_2$,
and so on, each spanning $2^{4t_i}$ cells of $A$.
Let us say that such a region, $A_j$, is \emph{over-crowded} if it
contains more than $r_i=2^{4t_i}/t_i^2$ distinguished items.
We note the following.

\begin{lemma}
\label{lem:region}
The probability that any given region, $A_j$, is over-crowded is less than
$2^{-4^{t_i}}$.
\end{lemma}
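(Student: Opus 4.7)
The plan is to reduce this to a Chernoff-bound calculation on a sum of independent indicator variables, using the independence assertion made in the paragraph just above the lemma. First, I would observe that the preceding paragraph guarantees that each distinguished element present in $A$ at the start of phase $i$ survives both thinning steps independently with probability at most $1/t_i^4$. Since $A_j$ spans exactly $2^{4t_i}$ cells and non-distinguished cells contribute nothing to the survivor count, the number $X_j$ of remaining distinguished elements in $A_j$ is stochastically dominated by a sum of at most $2^{4t_i}$ independent $0$--$1$ random variables, each with success probability $1/t_i^4$. Thus $\mu := E[X_j] \le 2^{4t_i}/t_i^4$.

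Next, I would rewrite the over-crowding threshold as $r_i = 2^{4t_i}/t_i^2 = t_i^2\, \mu$, and apply Lemma \ref{lem:chernoff} with $\gamma = t_i^2$. Since $t_i \ge t_1 = 4$, we have $\gamma \ge 16 > 2e$, so the hypothesis of the lemma is satisfied, yielding
\[
\Pr(X_j > r_i) \;<\; 2^{-\gamma\mu\log(\gamma/e)} \;=\; 2^{-(2^{4t_i}/t_i^2)\log(t_i^2/e)}.
\]

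The remaining task, which is the only real calculation, is to verify that the exponent here is at least $4^{t_i} = 2^{2t_i}$, i.e., that $2^{2t_i}\log(t_i^2/e) \ge t_i^2$. This is routine: for $t_i \ge 4$, the factor $\log(t_i^2/e) \ge 1$, and $2^{2t_i}$ dominates $t_i^2$ with enormous slack (and the gap only widens for the later phases, where $t_i$ is a tower of twos). I do not expect any real obstacle here; the main conceptual point is simply that the paper's earlier assertion of independence across distinguished elements is what licenses the use of a multiplicative Chernoff bound, and the doubly-exponential target probability $2^{-4^{t_i}}$ is exactly what the $\log(\gamma/e)$ factor in Lemma \ref{lem:chernoff} buys us once $\gamma = t_i^2$ is itself exponentially large relative to $\mu$.
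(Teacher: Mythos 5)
Your proof is correct and follows essentially the same route as the paper's: model the per-element failure probability as an independent Bernoulli$(1/t_i^4)$, bound the expectation over the $2^{4t_i}$ cells of the region by $\mu = 2^{4t_i}/t_i^4$, apply Lemma~\ref{lem:chernoff} with $\gamma = t_i^2$ (using $t_i^2 \ge 16 > 2e$), and then check that the resulting exponent dominates $4^{t_i}$. The only cosmetic difference is that you retain the $\log(\gamma/e)$ factor, whereas the paper drops it (as it explicitly notes it may, since that factor exceeds $1$), which makes the paper's final arithmetic marginally simpler but is otherwise the same argument.
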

\begin{proof}
Given the way that the thinning pass is performed, we can bound the
probability of failure of items being copied by independent indicator
random variables that are $1$ with probability $1/t_i^4$.
We can bound expected value, $\mu$, of the sum, $X$, of these 
random variables with
\[
\mu' = \frac{2^{4t_i}}{t_i^4} .
\]
We can use the Chernoff bound of Lemma~\ref{lem:chernoff}
and the fact that $t_i^2\ge 2^4$,
then, to bound
\begin{eqnarray*}
\Pr\left(X > \frac{2^{4t_i}}{t_i^2}\right) &=& 
\Pr\left(X > t_i^2 \mu'\right) \\
&\le& 2^{-t_i^2\mu'} \\
&=& 2^{-2^{4t_i}/t_i^2} \\ 
&<& 2^{-2^{2t_i}} \\ 
&=& 2^{-4^{t_i}},
\end{eqnarray*}
since $2^{2t_i}>t_i^2$, for $t_i\ge2^2$.
\end{proof}
(for Lemma~\ref{lem:region})

So, for each region, $A_j$, we apply Theorem~\ref{thm:tight-sparse} 
to compact the 
distinguished elements in $A_j$ in an oblivious fashion, assuming
$A_j$ is not over-crowded.
Note that this method runs in $O(|A_j|+r_i\log^2 r_i)$ time,
which is $O(|A_j|)$.
Moreover, assuming that $A_j$ is not over-crowded,
this procedure fails for any such $A_j$ (independently) 
with probability at most
\[
\frac{1}{r_i^3} = \frac{t_i^6}{2^{12t_i}} ,
\]
which is at most
\[
\frac{1}{2^{9t_i}} ,
\]
since $t_i^6\le 2^{3t_i}$, for $t_i\ge 2^2$.

Next, for each subarray, $A_j'$,
consisting of the first $r$ cells in a region $A_j$, we perform $t_i^2$
$A_j'$-to-$D$ thinning passes.
Thus, we can model the event that a distinguished
member of such an $A_j'$ fails to be inserted into $D$ 
with an independent indicator random variable that is $1$ with probability at
most
\[
\frac{1}{2^{2t_i^2}}.
\]
In addition, we have the following.

\begin{lemma}
\label{lem:compress}
The total number of distinguished elements that belong to a subarray $A_j'$ 
yet fail
to be mapped to $D$ is more than $r/(2\cdot 2^{4t_i})$ with probability 
at most $1/(2\cdot n^d)$, for any given constant $d\ge 1$.
\end{lemma}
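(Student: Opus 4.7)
My plan is to bound the total number of failures by a standard Chernoff argument applied to independent $0/1$ random variables, exploiting the independence granted by the paragraph immediately preceding the lemma.

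First I would recall two facts established earlier in phase $i$: by the inductive hypothesis there are at most $r/t_i^4$ distinguished elements remaining in $A$, so at most $r/t_i^4$ of them lie in subarrays $A_j'$; and the paragraph preceding the lemma explains that, thanks to the $t_i^2$ independent $A_j'$-to-$D$ thinning passes, the event that any particular distinguished element fails to be mapped to $D$ can be dominated by an independent Bernoulli variable with success probability at most $1/2^{2 t_i^2}$. Let $X$ be the total number of such failures, so $E(X) \le \mu := (r/t_i^4)\,/\,2^{2 t_i^2}$.

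Next I would apply Lemma~\ref{lem:chernoff} with $\gamma$ chosen so that $\gamma\mu = r/(2\cdot 2^{4 t_i})$, giving
\[
\gamma \;=\; t_i^4 \cdot 2^{\,2 t_i^2 - 4 t_i - 1}.
\]
Since $t_i \ge t_1 = 4$, the exponent $2 t_i^2 - 4 t_i - 1 \ge 15$, so $\gamma > 2e$ and the hypothesis of Lemma~\ref{lem:chernoff} holds. The lemma then yields
\[
\Pr\!\left(X > \tfrac{r}{2\cdot 2^{4 t_i}}\right) \;\le\; 2^{-\gamma\mu \log(\gamma/e)} \;\le\; 2^{-\gamma\mu},
\]
where the last inequality uses $\log(\gamma/e)\ge 1$.

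Finally I would show $\gamma\mu \ge d\log n + 1$ for $n \ge n_0$. We are in the general case, so the phase did not terminate early; thus $r/t_i^4 > n/\log^2 n$, which combined with $r\le n/4$ forces $t_i \le \sqrt{\log n}$ and hence $2^{4 t_i} \le 2^{4\sqrt{\log n}} = n^{o(1)}$. Therefore
\[
\gamma\mu \;=\; \frac{r}{2\cdot 2^{4 t_i}} \;\ge\; \frac{n}{2\log^2 n \cdot 2^{4\sqrt{\log n}}},
\]
which dominates $d\log n + 1$ for $n\ge n_0$ with $n_0$ chosen large enough. Substituting gives the claimed bound of $1/(2 n^d)$. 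The main obstacle is the last step: one must verify that the region size $2^{4 t_i}$ is small enough relative to $r$ to leave enough slack for concentration, and this is precisely what the general-case termination condition $r/t_i^4 > n/\log^2 n$ buys us.
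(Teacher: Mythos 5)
Your proof is correct and follows essentially the same route as the paper's: dominate failures by independent Bernoullis of parameter $2^{-2t_i^2}$, apply the Chernoff bound of Lemma~\ref{lem:chernoff} to reach $2^{-r/(2\cdot 2^{4t_i})}$, and then use the general-case condition $r/t_i^4 > n/\log^2 n$ (together with $r < n/4$) to conclude $t_i = O(\sqrt{\log n})$ and hence $r/(2\cdot 2^{4t_i}) \ge d\log n + 1$ for $n \ge n_0$. The only cosmetic difference is that you bound the number of at-risk elements by the inductive $r/t_i^4$ while the paper uses the looser $r$; since both choose $\gamma$ so that $\gamma\mu = r/(2\cdot 2^{4t_i})$, the resulting bound is identical.
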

\begin{proof}
There are at most $r$ distinguished elements remaining in $A$ at the
beginning of phase $i$ that could possibly belong to a subarray $A_j$.
The probability that these fail to be mapped to $D$ can be
characterized with a sum, $X$, of independent indicator random variables, each
which is $1$ with probability at most $2^{-2t_i^2}$.
Thus, the expected value of $X$ can be bound by
\[
\mu = \frac{r}{2^{2t_i^2}},
\]
and we can apply the Chernoff bound of Lemma~\ref{lem:chernoff} as follows:
\begin{eqnarray*}
\Pr\left( X > \frac{r}{2\cdot 2^{4t_i}}\right) 
  &\le& \Pr\left(X > \frac{2^{2t_i^2}}{2\cdot 2^{4t_i}} \mu\right) \\
  &\le& 2^{-r/(2\cdot 2^{4t_i})}.
\end{eqnarray*}
Moreover,
since $r/t_i^4 < n/\log^2 n$, we have that $t_i < 0.25\log^{1/2} n$.
Thus,
\begin{eqnarray*}
\frac{r}{2\cdot 2^{4t_i}} &\ge& 
     \frac{n}{\log^2 n \cdot 2 \cdot 2^{\sqrt{\log n}}} \\
     &\ge& d\log n + 1,
\end{eqnarray*}
for $n$ larger than a sufficiently large constant, $n_0$.
\end{proof}
(for Lemma~\ref{lem:compress})

Of course, a distinguished member of $A$ may fail to be mapped into an
$A_j'$ subarray,
either because it belongs to an over-crowded region or it belongs to a
region that fails to be compacted.
Fortunately, we have 
the following.

\begin{lemma}
\label{lem:crowded}
The number of distinguished elements in $A$ that fail to be mapped into $D$
in phase $i$, either because they belong to an over-crowded region
or because their region fails to be compacted,
is more than $r/(2\cdot 2^{4t_i})$ with probability at most $1/(2n^d)$.
\end{lemma}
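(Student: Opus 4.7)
The plan is to handle both failure modes in phase $i$ with a single per-region indicator and then apply the Chernoff bound of Lemma~\ref{lem:chernoff} to their sum. For each region $A_j$ of $2^{4t_i}$ cells, I would let $Y_j\in\{0,1\}$ equal $1$ exactly when $A_j$ is either over-crowded or its Theorem~\ref{thm:tight-sparse} compaction fails. By Lemma~\ref{lem:region} the first event has probability at most $2^{-4^{t_i}}$, and the stated per-region bound gives probability at most $1/r_i^3\le 2^{-9t_i}$ for the second (conditioned on not being over-crowded). Since $4^{t_i}\ge 9t_i$ for $t_i\ge t_1=4$, a union bound yields
\[
\Pr[Y_j = 1]\ \le\ 2^{-4^{t_i}}+2^{-9t_i}\ \le\ 2^{1-9t_i}.
\]
Independence of the $Y_j$'s across regions follows because the over-crowding events depend on disjoint sets of independent survival indicators (one per element of $A$, as established in the thinning-out step analysis), and the Bloom-table compactions use fresh random hash functions seeded independently per region.

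Next I would bound the total number $F$ of distinguished elements that fail for either reason by $F\le 2^{4t_i}\sum_j Y_j$, since a failing region can account for at most $2^{4t_i}$ elements. The size of $A$ after the thinning-out step is at most $n+r/2\le 2n$, so the number of regions is at most $J\le 2n/2^{4t_i}$, giving $\mu:=E[\sum_j Y_j]\le 4n/2^{13t_i}$. It therefore suffices to establish $\Pr[\sum_j Y_j > r/(2\cdot 2^{8t_i})]\le 1/(2n^d)$.

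For this I would apply Lemma~\ref{lem:chernoff} with $\gamma:=r/(2\cdot 2^{8t_i}\mu)$. In the general case we have $r/t_i^4>n/\log^2 n$, so $r>nt_i^4/\log^2 n$, which combined with $r\le n$ forces $t_i<\sqrt{\log n}$ and hence makes $2^{8t_i}$ sub-polynomial in $n$. Thus $\gamma\ge t_i^4\cdot 2^{5t_i}/(8\log^2 n)>2e$ and, for the base-case threshold $n_0$ chosen large enough, $\gamma\mu=r/(2\cdot 2^{8t_i})\ge d\log n+1$. The Chernoff bound then gives
\[
\Pr\Bigl[\textstyle\sum_j Y_j>\gamma\mu\Bigr]\ <\ 2^{-\gamma\mu\log(\gamma/e)}\ \le\ 2^{-(d\log n+1)}\ =\ \frac{1}{2n^d},
\]
using $\log(\gamma/e)\ge 1$ since $\gamma>2e$. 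Translating back via $F\le 2^{4t_i}\sum_j Y_j$ yields the claimed bound.

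The main obstacle will be verifying the independence of the $Y_j$'s cleanly, particularly the interaction between the over-crowding event (driven by the thinning-out randomness, which is shared across regions) and the compaction-failure event (driven by region-local Bloom-table hash seeds). After that, the only remaining work is bookkeeping: picking $n_0$ so the Chernoff exponent dominates $d\log n$ uniformly across the entire range of $t_i$ allowed by the general-case hypothesis $r/t_i^4>n/\log^2 n$.
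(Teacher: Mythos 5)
Your overall structure mirrors the paper's argument (count failing regions as a sum of independent indicators, multiply by the region size $2^{4t_i}$, apply the Chernoff bound), but merging the two failure modes into a single $Y_j$ and using the \emph{total} number of regions $J\le 2n/2^{4t_i}$ introduces a genuine gap: the resulting $\mu\le 4n/2^{13t_i}$ is too loose, and the claim ``$\gamma\ge t_i^4\cdot 2^{5t_i}/(8\log^2 n)>2e$'' is false in general. Since $r$ may be as small as $nt_i^4/\log^2 n$, your lower bound on $\gamma$ is $t_i^4\cdot 2^{5t_i}/(8\log^2 n)$, which for fixed $t_i$ (e.g.\ $t_i=t_1=4$) is a constant divided by $\log^2 n$ and tends to $0$ as $n\to\infty$; concretely for $t_i=4$ it drops below $2e$ once $\log n\gtrsim 2500$. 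When $\gamma\le 2e$, Lemma~\ref{lem:chernoff} simply does not apply, and in fact your threshold $r/(2\cdot 2^{8t_i})$ can even be \emph{below} the overestimate $\mu$, so no tail bound of the intended form can follow. The source of the slack is that you are charging every region, including the vast majority containing no distinguished elements, with probability $2^{1-9t_i}$ of failure; an empty region has $Y_j=0$ deterministically.

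The paper avoids this by counting only \emph{non-empty} regions: since there are at most $r/t_i^4$ surviving distinguished elements at the start of the phase, at most $r/t_i^4$ regions are non-empty. Using this as the number of Bernoulli trials gives $\mu\le r/(t_i^4\,2^{9t_i})$ (or $2r/(t_i^4\,2^{9t_i})$ in your merged version), which makes $\gamma = t_i^4\,2^{t_i-2}$ a quantity depending only on $t_i$, uniformly bounded below by $4^4\cdot 4 = 1024 > 2e$ for all $t_i\ge 4$. With that substitution, the rest of your computation (the factor-$2^{4t_i}$ translation to element counts, the independence argument, and the verification that $\gamma\mu\ge d\log n + 1$ via $t_i<\tfrac14\sqrt{\log n}$ and $n\ge n_0$) goes through. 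The paper also keeps the two failure modes separate, giving each a budget of $r/(4\cdot 2^{4t_i})$ and probability $1/(4n^d)$, whereas you merge them with a union bound on $\Pr[Y_j=1]$; that part of your approach is a fine stylistic variant, provided the region count is fixed.
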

\begin{proof}
Recall that the probability that a region $A_j$ is over-crowded is at most
$2^{-4^{t_i}}$ and the probability that a non-over-crowded region $A_j$
fails to be mapped to a compressed region, $A_j'$, is at most
$2^{-9t_i}$. 
Let us consider the latter case first.

Let $X$ denote
the number of distinguished elements that belong to
non-over-crowded regions that fail to be compacted in phase $i$.
Note that the number of non-empty regions is at most $r/t_i^4$ and the number
of distinguished items per such region is clearly no more than $2^{4t_i}$.
Thus, if we let $Y$ be the number of non-empty non-over-crowded 
regions that fail to be compressed, then
\[
X \le 2^{4t_i} Y .
\]
Moreover, since $Y$ is the sum of at most $r/t_i^4$ independent indicator
random variables, and 
\[
E(Y) \le \mu = r/(t_i^4 2^{9t_i}),
\]
we can use the Chernoff
bound from Lemma~\ref{lem:chernoff} as follows:
\begin{eqnarray*}
\Pr\left( X > \frac{r}{4\cdot 2^{4t_i}}\right) 
   &\le& \Pr\left( Y > \frac{r}{2^{8t_i+2}}\right) \\
   &\le& \Pr\left( Y > 2^{t_i-2}t_i^4 \mu\right) \\
   &\le& 2^{-r/2^{8t_i}} \\
   &\le& 2^{-n/(\log^2 n \cdot 2^{2\sqrt{\log n}})},
\end{eqnarray*}
since $t_i\le 0.25\log^{1/2} n$ and $t_i^4\ge 2^8$.
Note that, for $n$ greater than a sufficiently large constant, $n_0$,
\[
   \frac{n}{\log^2 n \cdot 2^{2\sqrt{\log n}}} \ge d\log n + 2,
\]
for any fixed constant $d\ge 1$.
Thus, the number of distinguished elements that belong to
non-over-crowded regions that fail to be compacted in phase $i$ 
is more than $r/(4\cdot 2^{4t_i})$ with probability at most $1/(4n^d)$.
By a similar argument,
the number of distinguished items that belong to over-crowded regions
is more than $r/(4\cdot 2^{4t_i})$ with probability at most $1/(4n^d)$.
In fact, this case is easier,
since $t_i\ge 4$ implies $2^{-4^{t_i}}<2^{9t_i}$.
Therefore,
the number of distinguished elements that fail for either of these reasons is
more than $r/(2\cdot 2^{4t_i})$ with probability at most $1/(2n^d)$.
\end{proof}
(for Lemma~\ref{lem:crowded})

Thus, combining this lemma with Lemma~\ref{lem:compress}, we have that
the number of
distinguished elements in $A$ that are not mapped (data-obliviously)
into $D$ in phase $i$ is
at most
\[
\frac{r}{2^{4t_i}} = \frac{r}{t_{i+1}^4} ,
\]
with probability at least $1-1/n^d$, 
which gives us the induction invariant for the next phase.
This completes the proof (noting that
we can apply the above algorithm with $d'=d+1$, since there are $O(\log^* n)$
phases, each of which runs in $O(n)$ time and fails with 
probability at most $1/n^{d'}$).
\end{proof}
(for Theorem~\ref{thm:logstar})
\fi

\end{appendix}
\end{document}